\newtheorem{proposition}{Proposition}
\newtheorem{lemma}{Lemma}
\newtheorem{definition}{Definition}
\def\({\left(}
\def\){\right)}
\def\[{\begin{eqnarray}}
\def\]{\end{eqnarray}}
\numberwithin{equation}{section}
\begin{document}

\title{Extensions of the finite nonperiodic Toda lattices with indefinite metrics
}

\author{
\  \  Jian Li, Chuanzhong Li\footnote{Corresponding author:lichuanzhong@nbu.edu.cn.}\\
\small School of Mathematics and Statistics,  Ningbo University, Ningbo, 315211, China}

\date{}

\maketitle

\abstract{
In this paper, we firstly construct a weakly coupled Toda lattices with indefinite metrics which consist of $2N$ different coupled Hamiltonian systems. Afterwards, we consider the iso-spectral manifolds of extended tridiagonal Hessenberg matrix with indefinite metrics  what is an extension of a strict tridiagonal matrix with indefinite metrics. For the initial value problem of the extended symmetric Toda hierarchy with indefinite metrics, we introduce the inverse scattering procedure in terms of eigenvalues by using the Kodama's method. In this article,  according to the orthogonalization procedure of Szeg\"{o}, the relationship between the $\tau$-function and the given Lax matrix is also discussed. We can verify the results derived from the orthogonalization procedure with a simple example. After that, we  construct a strongly coupled Toda lattices with indefinite metrics  and derive its tau structures. At last, we generalize the weakly coupled Toda lattices with indefinite metrics  to the  $Z_{n}$-Toda lattices with indefinite metrics.
}\\

{\bf Mathematics Subject Classifications (2010)}:  37K05, 37K10, 35Q53.\\
{\bf Keywords:}  Hamiltonian systems, indefinite metrics,  coupled Toda lattices, inverse scattering method, $\tau$-function.\\
\allowdisplaybreaks
\tableofcontents

\section {Introduction}

\ \ \ \ Toda system is one of the most important integrable systems in mathematical physics. Many mathematicians have made significant contributions to the Toda equations and its generalization \cite{16}, such as M. Toda, Y. Kodama, etc. In recent years, some senior mathematicians have used different methods to study the Toda equations, such as symmetry and bilinear method, etc. This paper aims to extend the Toda equations via an extended algebra group{\cite{1}}, and the solutions can be pasted together to constitute a compact manifold.
On the basis of \cite{18,19}, the Toda lattices are produced by semisimple Lie algebra.
In the process of solving the extended Toda equations, we use the inverse scattering method, it also promotes the development of mathematical physics, integrable systems and Lie algebras \cite{2,14}. According to the Hamiltonian systems of $2N$ particles which described by the finite nonperiodic Toda lattice hierarchy \cite{7}, then we introduce a pair of Hamiltonian systems $(H, \widehat{H})$  given by
\begin{equation}\label{a2}
\begin{cases}
H=\frac{1}{2}\sum^{N}_{i=1}y^{2}_{i}+\sum^{N}_{i=1}\exp(x_{i}-x_{i+1}),\\
\widehat{H}=\sum^{N}_{i=1}y_{i}\widehat{y}_{i}+\sum^{N}_{i=1}(\widehat{x}_{i}-\widehat{x}_{i+1})\exp(x_{i}-x_{i+1}).
\end{cases}
\end{equation}
    The topology of an iso-spectral set of tridiagonal Hessenberg matrices was considered{\cite{7}}, and it has distinct real eigenvalues in the following form,
\begin{equation}\label{a3}
L_{H}=\left(\begin{smallmatrix}
\alpha_{1}&1&0&0&\cdots&0&0&0\\
\beta_{1}&\alpha_{2}&1&0&\cdots&0&0&0\\
 & &\ddots&\ddots&\ddots& & & \\
0&0&0&0&\cdots&\cdots&\alpha_{N-1}&1\\
0&0&0&0&\cdots&\cdots&\beta_{N-1}&\alpha_{N}\\
\end{smallmatrix}
\right)\\,
\end{equation}
\begin{equation}\label{a31}
\widehat{L}_{H}=\left(\begin{smallmatrix}
\widehat{\alpha}_{1}&1&0&0&\cdots&0&0&0 \\
\widehat{\beta}_{1}&\widehat{\alpha}_{2}&1&0&\cdots&0&0&0\\
 & &\ddots&\ddots&\ddots& & &\\
0&0&0&0&\cdots&\cdots&\widehat{\alpha}_{N-1}&1\\
0&0&0&0&\cdots&\cdots&\widehat{\beta}_{N-1}&\widehat{\alpha}_{N}\\
\end{smallmatrix}
\right),
\end{equation}
where the variables $\widehat{\alpha}_{k}$ and $\widehat{\beta}_{k}$ in $L_{H}$ and $\widehat{L}_{H}$ are expressed as $s_{k}\widehat{a}_{k}=-\frac{1}{2}\widehat{y}_{k}$ and $\widehat{b}_{k}=\frac{{\widehat{x}_{k}}-{\widehat{x}_{k+1}}}{4}\exp(\frac{x_{k}-x_{k+1}}{2})$ with $s_{i}=\pm1$. while different signs of $s_{i}$ may create different systems. The initial value problem of Toda equations were studied by applying the inverse scattering method in \cite{13}, we generalize the above method and get general results on the basis of the method mentioned in the references. With the help of the defined inner product in this paper, the elements of $L$ and $\widehat{L}$ can be expressed in a simple way.

This paper is arranged as follows.  In Section 2, we construct the weakly coupled Toda equations via the transformation \cite{6}, by calculating the equations \eqref{29}, the elements of $L$ and $\widehat{L}$ can be expressed through an inner product and the  initial value of the weakly coupled Toda lattices. Finally, we briefly describe the relationships between the elements of $L_{H}$, $\widehat{L}_{H}$ and $\tau$-functions, and analyse specific relationships between $\widehat{\tau}_{i}$ and $\widehat{D}_{i}$. In section 3, we give a proof that the wave functions of the weakly coupled Toda lattices  can be solved by the inverse scattering method with the Gram-Schmidt's orthogonalization. In Section 4, we illustrate these results with a specific example, and some properties of the elements in the example are discussed. In section 5, we introduce the strongly coupled Toda lattices with indefinite metrics, and give some different conclusions compare with the weakly coupled Toda lattices. In section 6, first we give a definition of the $Z_{n}$-Toda {equations} by the algebraic transformation, and the {solutions of  the $Z_{n}$-Toda equations} are obtained according to the initial value.

\section{Weakly coupled Toda lattices with indefinite metrics}
 \ \ \ \ In this section, we define a weakly coupled Toda lattices with indefinite metrics. For the Hamiltonian \eqref{a2}, a  transformation of variables will be introduced similarly as the one from Flaschka \cite{6}, which is about the classical Toda lattices with indefinite metrics:
\begin{equation}\label{2.1}
\begin{cases}
s_{k}a_{k}=-\frac{y_{k}}{2},\\
s_{k}\widehat{a}_{k}=-\frac{\widehat{y}_{k}}{2},k=1,\ldots, N;
\end{cases}
\end{equation}
\begin{equation}\label{2.2}
\begin{cases}
b_{k}=\frac{1}{2}\exp(\frac{x_{k}-x_{k+1}}{2}),\\
\widehat{b}_{k}=\frac{{\widehat{x}_{k}}-{\widehat{x}_{k+1}}}{4}\exp(\frac{x_{k}-x_{k+1}}{2}), k=1,\ldots, N-1.
\end{cases}
\end{equation}
{Then} the extended Toda equations are written in this form with $b_{0}=\widehat{b}_{0}=b_{N}=\widehat{b}_{N}=0$,
\begin{equation}\label{2.3}
\begin{cases}
\frac{da_{k}}{dt}=\frac{1}{2}(s_{k+1}b_{k}^{2}-s_{k-1}b_{k-1}^{2}),\\
\frac{d\widehat{a}_{k}}{dt}=s_{k+1}b_{k}\widehat{b}_{k}-s_{k-1}b_{k-1}\widehat{b}_{k-1};
\end{cases}
\end{equation}
\begin{equation}\label{2.4}
\begin{cases}
\frac{db_{k}}{dt}=\frac{1}{4}b_{k}(s_{k+1}a_{k+1}-s_{k}a_{k}),\\
\frac{d\widehat{b}_{k}}{dt}=\frac{1}{4}[(s_{k}\widehat{a}_{k}-s_{k+1}\widehat{a}_{k+1})b_{k}+(s_{k}a_{k}-s_{k+1}a_{k+1})\widehat{b}_{k}].
\end{cases}
\end{equation}
The equations \eqref{2.3} and \eqref{2.4} can also be expressed as the following Lax equations:
\begin{equation}\label{2.5}
\begin{cases}
\frac{d}{dt}L=[B,L],\\
\frac{d}{dt}\widehat{L}=[\widehat{B},L]+[B,\widehat{L}],
\end{cases}
\end{equation}
where $L$ and $\widehat{L}$ are a $N\times N$ tridiagonal matrix with real entries,
\begin{equation}\label{2.6}
L=\left(\begin{smallmatrix}
s_{1}a_{1}&s_{2}b_{1}&\cdots&0\\
s_{1}b_{1}&s_{2}a_{2}&\vdots&0\\
 &\ddots&\ddots&\\
0&\cdots&s_{N-1}a_{N-1}&s_{N}b_{N-1}\\
0&\cdots&s_{N-1}b_{N-1}&s_{N}a_{N}\\
\end{smallmatrix}
\right)\\,
\end{equation}
\begin{equation}\label{9.0}
\widehat{L}=\left(\begin{smallmatrix}
s_{1}\widehat{a_{1}}&s_{2}\widehat{b_{1}}&\cdots&0\\
s_{1}\widehat{b_{1}}&s_{2}\widehat{a_{2}}&\cdots&0\\
 &\ddots&\ddots&\\
0&\cdots&s_{N-1}\widehat{a}_{N-1}&s_{N}\widehat{b}_{N-1}\\
0&\cdots&s_{N-1}\widehat{b}_{N-1}&s_{N}\widehat{a}_{N}\\
\end{smallmatrix}
\right)\\,
\end{equation}
$B$ and $\widehat{B}$ are the projection of $L$, $\widehat{L}$ given by
\begin{equation}\label{27}
\begin{cases}
B:=\frac{1}{4}[(L)_{>0}-(L)_{<0}],\\
\widehat{B}:=\frac{1}{4}[(\widehat{L})_{>0}-(\widehat{L})_{<0}].
\end{cases}
\end{equation}
Note that, $LS^{-1}$ and $\widehat{L}S^{-1}$  are symmetric tridiagonal matrix and $S$ is a diagonal matrix $S=\mathrm{diag}(s_{1}, s_{2},..., s_{N})$.
 In the this section, the extended Hamilton equations \eqref{a2} can be expressed by Lax equations \eqref{2.5} and the matrices \eqref{a3}, \eqref{a31}. In fact, the variables in \eqref{a3} and \eqref{a31} are given by
\begin{equation}\label{28}
\begin{cases}
\alpha_{k}=s_{k}a_{k},\\
\widehat{\alpha}_{k}=s_{k}\widehat{a}_{k},\\
\beta_{k}=s_{k}s_{k+1}b_{k}^{2},\\
\widehat{\beta}_{k}=2s_{k}s_{k+1}b_{k}\widehat{b}_{k},
\end{cases}
\end{equation}
and there is no doubt that they are equivalent. In order to solve the problem of the Lax equations \eqref{2.5}, we can use the inverse scattering method to construct a specific formula {from \cite{13}}. There are four linear equations that are contained in \eqref{2.5},
\begin{equation}\label{29}
\begin{cases}
L\Phi=\Phi\Lambda,\\
\widehat{L}\Phi+L\widehat{\Phi}=\widehat{\Phi}\Lambda,\\
\frac{d}{dt}\Phi=B\Phi,\\
\frac{d}{dt}\widehat{\Phi}=\widehat{B}\Phi+B\widehat{\Phi},
\end{cases}
\end{equation}
where $\Phi$ is the eigenmatrix of $L$, and $\widehat{\Phi}$ is the eigenmatrix of $\widehat{L}$, $\Lambda=\mathrm{diag}(\lambda_{1},..., \lambda_{N-1}, \lambda_{N})$ is a diagonal matrix, and $\Phi$, $\widehat{\Phi}$ also satisfy the following relationship:
\begin{equation}\label{210}
\begin{cases}
\Phi^{T}S\Phi=S,\\
\Phi^{T}S\widehat{\Phi}+\widehat{\Phi}^{T}S\Phi=0,\\
\Phi S^{-1}\Phi^{T}=S^{-1},\\
\Phi S^{-1}\widehat{\Phi}^{T}+\widehat{\Phi} S^{-1}\Phi^{T}=0.
\end{cases}
\end{equation}
Particularly, if $S=I$ (the identity matrix), then \eqref{29} shows that $L$ can be diagonalized by  using an orthogonal matrix $O(N)$; if $S=\mathrm{diag}(1,...,1,-1,...,-1)$, the diagonalization can be obtained by using a ``orthogonal'' matrix $O(p,q)$ with $p+q=N$. From the orthogonality of \eqref{210}, we obtain the eigenmatrix $\Phi$($\hat{\Phi}$) of $L$($\hat{L}$). Although eigenvalues of $L$ are real, the elements in $\Phi^{T}S\Phi$ differs from $s_{i}$.
The eigenmatrixs $\Phi$ and $\widehat{\Phi}$ consist of the eigenvectors of $L$ and $\widehat{L}$, and considering the following system of linear equations
\begin{equation}\label{211}
\begin{cases}
L\phi=\lambda\phi,\\
\widehat{L}\phi+L\widehat{\phi}=\lambda\widehat{\phi},
\end{cases}
\end{equation}
which the $\phi$ and $\widehat{\phi}$ consist of $\Phi$, $\widehat{\Phi}$ are given in the following form,
\begin{equation}\label{212}
\Phi=\left(\begin{matrix}
\phi_{1}(\lambda_{1})&\phi_{1}(\lambda_{2})&\cdots&\phi_{1}(\lambda_{N})\\
\vdots&\vdots&\vdots&\\
\phi_{N}(\lambda_{1})&\phi_{N}(\lambda_{2})&\cdots&\phi_{N}(\lambda_{N})\\
\end{matrix}
\right)\\,
\end{equation}
\begin{equation}\label{2121}
\widehat{\Phi}=\left(\begin{matrix}
\widehat{\phi}_{1}(\lambda_{1})&\widehat{\phi}_{1}(\lambda_{2})&\cdots&\widehat{\phi}_{1}(\lambda_{N})\\
\vdots&\vdots&\vdots\\
\widehat{\phi}_{N}(\lambda_{1})&\widehat{\phi}_{N}(\lambda_{2})&\cdots&\widehat{\phi}_{N}(\lambda_{N})\\
\end{matrix}
\right)\\.
\end{equation}
From the first two equations of \eqref{210}, we get something that looks like an ``orthogonality" relationship as follows,
\begin{equation}\label{213}
\begin{cases}
\sum_{k=1}^{N}s_{k}^{-1}\phi_{i}(\lambda_{k})\phi_{j}(\lambda_{k})=\delta_{ij}s_{i}^{-1},\\
\sum_{k=1}^{N}s_{k}^{-1}[\widehat{\phi}_{i}(\lambda_{k})\phi_{j}(\lambda_{k})+\phi_{i}(\lambda_{k})\widehat{\phi}_{j}(\lambda_{k})]=0.
\end{cases}
\end{equation}
Also, we can get the similarly relationship from the another two equations of \eqref{210},
\begin{equation}\label{214}
\begin{cases}
\sum_{k=1}^{N}s_{k}\phi_{k}(\lambda_{i})\phi_{k}(\lambda_{j})=\delta_{ij}s_{i},\\
\sum_{k=1}^{N}s_{k}[\widehat{\phi}_{k}(\lambda_{i})\phi_{k}(\lambda_{j})+\phi_{k}(\lambda_{i})\widehat{\phi}_{k}(\lambda_{j})]=0.
\end{cases}
\end{equation}
According to \eqref{213}, we extend the inner product with four functions of $\lambda$ {from \cite{1}},
\begin{equation}\label{215}
\begin{cases}
{<f,g>}:=\sum_{k=1}^{N}s_{k}^{-1}f(\lambda_{k})g(\lambda_{k}),\\
{<f,\widehat{g}>+<\widehat{f},g>}:=\sum_{k=1}^{N}s_{k}^{-1}[f(\lambda_{k})\widehat{g}(\lambda_{k})+\widehat{f}(\lambda_{k})g(\lambda_{k})],
\end{cases}
\end{equation}
where $\lambda$ are arbitrary.
The elements of $L$ and $\widehat{L}$ can be expressed:
\begin{equation}\label{217}
\begin{cases}
a_{ij}:=(L)_{ij}=s_{j}{<\lambda\phi_{i}\phi_{j}>},\\
\widehat{a}_{ij}:=(\widehat{L})_{ij}=s_{j}{<\lambda\widehat{\phi}_{i}\phi_{j}>+s_{j}<\lambda\phi_{i}\widehat{\phi}_{j}>}.
\end{cases}
\end{equation}
Thus, the elements of $L$ and $\widehat{L}$ can be expressed by the above inner product with $\phi_{i}$ and $\widehat{\phi}_{i}$. In fact, {a lot of work has been finished in this area. Not only the $\Phi$ can be obtained by the orthonormalization procedure of G. Szeg\"{o} \cite{11}}, but there is another way to get the orthonormalization procedure, which is introduced by Kodama and Mclauglin \cite{5}. The specific forms of $\phi(t)$ and $\widehat{\phi}(t)$ are given below,
\begin{equation}\label{218}
\phi_{i}(\lambda,t)=\frac{e^{\lambda t}}{\sqrt{D_{i}(t)D_{i-1}(t)}}\left|\begin{array}{ccccc}
s_{1}c_{11}&s_{2}c_{12}&\cdots&s_{i-1}c_{1,i-1}&\phi_{1}^{0}(\lambda)\\
\vdots&\vdots&\ddots&\vdots&\vdots\\
s_{1}c_{i1}&s_{2}c_{i2}&\cdots&s_{i-1}c_{i,i-1}&\phi_{i}^{0}(\lambda)
\end{array}\right|,
\end{equation} and
\begin{equation}\label{219}
\begin{aligned}
\widehat{\phi}_{i}(\lambda,t)&=\frac{e^{\lambda t}}{[D_{i}(t)D_{i-1}(t)]^{\frac{1}{2}}}\sum^{i-1}_{k=1}\left(\left|\begin{array}{ccccc}
s_{1}c_{11}&\cdots&s_{k}\widehat{c}_{1,k}&\cdots&s_{i-1}c_{1,i-1}\\
\vdots&\vdots&\ddots&\vdots&\vdots\\
s_{1}c_{11}&\cdots&s_{k}\widehat{c}_{1,k}&\cdots&s_{i-1}c_{i,i-1}
\end{array}\right|+\left|\begin{array}{ccccc}
s_{1}c_{11}&\cdots&\widehat{\phi}_{1}^{0}(\lambda)\\
\vdots&\vdots&\vdots\\
s_{1}c_{i1}&\cdots&\widehat{\phi}_{i}^{0}(\lambda)
\end{array}\right|\right)\\&-\frac{[\widehat{D}_{i}(t)D_{i-1}(t)+D_{i}(t)\widehat{D}_{i-1}(t)]e^{\lambda t}}{2[D_{i}(t)D_{i-1}(t)]^{\frac{3}{2}}}\left|\begin{array}{ccccc}
s_{1}c_{11}&\cdots&\phi_{1}^{0}(\lambda)\\
\vdots&\ddots&\vdots\\
s_{1}c_{i1}&\cdots&\phi_{i}^{0}(\lambda)
\end{array}\right|,
\end{aligned}
\end{equation}
where {$c_{ij}(t)=<\phi_{i}^{0}\phi_{j}^{0}e^{\lambda t}>$, $\widehat{c}_{ij}(t)=<\widehat{\phi}^{0}_{i}\phi^{0}_{j}e^{\lambda t}>+<\phi^{0}_{i}\widehat{\phi}^{0}_{j}e^{\lambda t}>$}. The $D_{k}(t)$ and $\widehat{D}_{k}(t)$ are expressed by the determinant of the $k\times k$ matrix with entries $s_{i}c_{ij}(t)$ and $s_{i}\widehat{c}_{ij}(t)$,
\begin{equation}\label{220}
D_{k}(t)=\left|\begin{array}{c}
(s_{i}c_{ij})_{1\leq i,j\leq k}
\end{array}\right|,\\
\end{equation}
\begin{equation}\label{221}
\widehat{D}_{k}(t)=\sum^{i}_{k=1}\left|\begin{array}{ccccc}
s_{1}c_{11}&\cdots&s_{k}\widehat{c}_{1,k}&\cdots&s_{i}c_{1,i}\\
\vdots&\vdots&\vdots&\vdots\\
s_{1}c_{i1}&\cdots&s_{k}\widehat{c}_{i,k}&\cdots&s_{i}c_{i,i}\\
\end{array}\right|.
\end{equation}
Note that  $D_{k}(0)=1$ and $\widehat{D}_{k}(0)=0$ for any $k$. With the {formulas} \eqref{218} and \eqref{219}, we {get} the solutions for the problem of \eqref{29},
we can derive the following proposition with the above formulas.
\begin{proposition}
$L(t)$ and $\widehat{L}(t)$ blow up to infinity at $t_{0}$ while $\widehat{D}_{i}(t_{0})=D_{i}(t_{0})=0$ with some $t_{0}$ and $i$.
 For the case of matrices $L$ and $\widehat{L}$, the $\widehat{D}_{i}(t)$ and $D_{i}(t)$ can be written with the $\tau$-functions, and the solutions $\alpha_{i}$, $\widehat{\alpha}_{i}$, $\beta_{i}$ and $\widehat{\beta}_{i}$ can be expressed in the following forms:
\end{proposition}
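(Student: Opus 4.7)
The plan is to read off both claims directly from the determinantal expressions \eqref{218}--\eqref{221} and the inner-product representation \eqref{217}. First, the blow-up assertion. From \eqref{218}, the wave function $\phi_i(\lambda,t)$ carries the prefactor $[D_i(t)D_{i-1}(t)]^{-1/2}$, while \eqref{219} gives $\widehat{\phi}_i(\lambda,t)$ with one piece of order $[D_i D_{i-1}]^{-1/2}$ and a second piece of order $[\widehat{D}_i D_{i-1}+D_i\widehat{D}_{i-1}][D_iD_{i-1}]^{-3/2}$. Substituting into \eqref{217} produces $a_{ij}(t)$ and $\widehat{a}_{ij}(t)$ as rational functions of $D_\bullet(t)$ and $\widehat{D}_\bullet(t)$. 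A direct pole-order count at a zero $t_0$ of $D_i$ shows that $a_{ii}$ and $a_{i,i+1}$ are singular; the hat entries acquire a higher-order singularity unless the combination $\widehat{D}_iD_{i-1}+D_i\widehat{D}_{i-1}$ also vanishes, which is exactly the condition $\widehat{D}_i(t_0)=0$ (given $D_i(t_0)=0$). One must then verify there is no miraculous cancellation inside $\langle\lambda\phi_i\phi_j\rangle$; this follows because the leading singular part of $\phi_i$ is a cofactor that is generically nonzero at $t_0$, so the numerator does not vanish to compensate.

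For the identification with tau-functions I would set $\tau_i(t):=D_i(t)$ and define $\widehat{\tau}_i(t)$ to be the natural derivative companion, i.e.\ $\widehat{\tau}_i(t)=\widehat{D}_i(t)$ (consistent with $\tau_i(0)=1$, $\widehat{\tau}_i(0)=0$). The key structural input is that the entries $c_{ij}(t)=\langle\phi_i^0\phi_j^0 e^{\lambda t}\rangle$ satisfy $\partial_t c_{ij}=\langle\lambda\phi_i^0\phi_j^0e^{\lambda t}\rangle$, so the Gram determinants $D_k$ obey a Jacobi (Désnanot--Dodgson / Plücker) identity
\[
D_{k+1}D_{k-1}=D_k\,\partial_t^2 D_k-(\partial_t D_k)^2,
\]
which is the Hirota bilinear equation of the Toda hierarchy; applying the same identity to the ``derived'' determinant $\widehat{D}_k$ defined via \eqref{221} produces the companion bilinear relation that pairs $\widehat{\tau}_k$ with $\tau_k$.

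To obtain the explicit formulas for $\alpha_k,\widehat{\alpha}_k,\beta_k,\widehat{\beta}_k$ I would compute $a_{kk}=s_k\langle\lambda\phi_k\phi_k\rangle$ by expanding \eqref{218} along its last column and using $\partial_t c_{ij}$ as above; the outcome is the classical
\[
\alpha_k=s_k a_k=\partial_t\log\!\frac{D_k}{D_{k-1}},\qquad \beta_k=s_ks_{k+1}b_k^{2}=\frac{D_{k+1}D_{k-1}}{D_k^{2}},
\]
with the signs $s_k$ absorbed through the $s_ic_{ij}$ convention in \eqref{220}. The hat formulas then come from applying the same identities to the linearly-deformed determinant family $D_k+\varepsilon\widehat{D}_k$ (treating $\varepsilon$ as a formal parameter) and extracting the $\varepsilon$-linear term, which yields
\[
\widehat{\alpha}_k=\partial_t\!\left(\frac{\widehat{D}_k}{D_k}-\frac{\widehat{D}_{k-1}}{D_{k-1}}\right),\qquad \widehat{\beta}_k=\beta_k\!\left(\frac{\widehat{D}_{k+1}}{D_{k+1}}+\frac{\widehat{D}_{k-1}}{D_{k-1}}-2\,\frac{\widehat{D}_k}{D_k}\right).
\]

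The main obstacle I expect is the bookkeeping for the hat sector: extracting $\widehat{\phi}_i$ from \eqref{219}, plugging it into $\widehat{a}_{ij}=s_j\langle\lambda\widehat{\phi}_i\phi_j\rangle+s_j\langle\lambda\phi_i\widehat{\phi}_j\rangle$, and keeping track of the mixed determinants in \eqref{221} while maintaining the indefinite signature carried by the $s_k$'s. The cleanest route is the formal-deformation trick above, which avoids expanding each signed determinant term by term and reduces every hat identity to the $\varepsilon$-derivative of its unhatted counterpart; modulo that device the remaining calculations are determinantal identities already built into the Szeg\H{o}/Kodama--McLaughlin orthogonalization.
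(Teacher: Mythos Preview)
The paper itself does not supply a proof of Proposition~1: the equations \eqref{222}--\eqref{227} are simply stated, with the supporting determinantal formulas for $\phi_i$ and $\widehat{\phi}_i$ derived only later in Section~3 (Lemma~1). So your proposal is not so much an alternative to the paper's argument as a genuine proof where the paper offers none.

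Your outline is sound, and the $\varepsilon$-deformation device---replacing $D_k$ by $D_k+\varepsilon\widehat D_k$ and reading off the hat identities as first-order perturbations of the classical ones---is cleaner than anything the paper does explicitly; it is exactly the right way to avoid the term-by-term bookkeeping you flag as the obstacle. One point to correct: your identification $\tau_i:=D_i$ is not quite what the paper intends. In the paper $\tau_i$ is the Hankel determinant \eqref{225} built from $\tau_1=c_{11}$, and the relation to $D_i$ is \eqref{227}, which carries the constant prefactor $s_1^{-i}\prod_{k=1}^{i-1}(\beta_k^0)^{i-k}$. This does not affect your $\alpha_k=\partial_t\log(D_k/D_{k-1})$, since the constants cancel in the logarithmic derivative, but it does alter the $\beta_k$ formula: with the paper's normalization one gets $\beta_k=\tau_{k+1}\tau_{k-1}/\tau_k^2$, whereas with yours one gets $\beta_k=\beta_k^0\,D_{k+1}D_{k-1}/D_k^2$. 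Both are correct once the normalization is declared, but you should make the choice explicit so the displayed formulas match \eqref{222}--\eqref{223}. The same remark applies to $\widehat\tau_i$ versus $\widehat D_i$: the paper's \eqref{227} shows $\widehat\tau_i$ is the $\varepsilon$-linear part of $\tau_i$ under the simultaneous deformation $\beta_k^0\mapsto\beta_k^0+\varepsilon\widehat\beta_k^0$, $D_i\mapsto D_i+\varepsilon\widehat D_i$, so your device already produces the right object once you carry the prefactor along.
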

\begin{equation}\label{222}
\begin{cases}
\alpha_{i}=s_{i}a_{i}=\frac{d}{dt}\log\frac{\tau_{i}}{\tau_{i-1}},\\
\widehat{\alpha_{i}}=s_{i}\widehat{a}_{i}=\frac{d}{dt}\frac{\widehat{\tau}_{i}\tau_{i-1}-\tau_{i}\widehat{\tau}_{i-1}}{\tau_{i}\tau_{i-1}},
\end{cases}
\end{equation}
\begin{equation}\label{223}
\begin{cases}
\beta_{i}=s_{i}s_{i+1}b_{i}^{2}=\frac{\tau_{i+1}\tau_{i-1}}{\tau_{i}^{2}},\\
\widehat{\beta}_{i}=2s_{i}s_{i+1}b_{i}\widehat{b}_{i}=\frac{\widehat{\tau}_{i+1}\tau_{i-1}+\tau_{i+1}\widehat{\tau}_{i-1}}{\tau_{i}^{2}}-2\frac{\tau_{i+1}\tau_{i-1}\widehat{\tau}_{i}}{\tau_{i}^{3}}.
\end{cases}
\end{equation}
The derivative of the weakly coupled Toda equations \eqref{2.3} and \eqref{2.4} are expressed in the bilinear form,
\begin{equation}\label{224}
\begin{cases}
\tau_{i}\tau_{i}''-(\tau_{i}')^{2}=\tau_{i+1}\tau_{i-1},\\
\widehat{\tau}_{i}\tau''_{i}+\tau_{i}\widehat{\tau}''_{i}-2\widehat{\tau_{i}'}\tau'_{i}=\widehat{\tau}_{i+1}\tau_{i-1}+\widehat{\tau}_{i-1}\tau_{i+1}.
\end{cases}
\end{equation}
The $\tau_{i}$ and $\widehat{\tau}_{i}$ can be written as determinants refer to \cite{3,9}:
\begin{equation}\label{225}
\tau_{i}=\left|\begin{array}{ccccc}
\tau_{1}&\tau^{'}_{1}&\cdots&\tau^{(i-1)}_{1}\\
\tau^{'}_{1}&\tau^{''}_{1}&\cdots&\tau^{(i)}_{1}\\
\vdots&\vdots&\ddots&\vdots\\
\tau^{(i-1)}_{1}&\tau^{(i)}_{1}&\cdots&\tau^{(2i-2)}_{1}\\
\end{array}\right|,
\end{equation}
\begin{equation}\label{226}
\widehat{\tau}_{i}=\sum^{i-1}_{k=0}\left|\begin{array}{ccccc}
\tau_{1}&\cdots&\widehat{\tau}^{(k)}_{1}&\cdots&\tau^{(i-1)}_{1}\\
\tau^{'}_{1}&\cdots&\widehat{\tau}^{(k+1)}_{1}&\cdots&\tau^{(i)}_{1}\\
\vdots&\vdots&\vdots&\vdots&\vdots\\
\tau^{(i-1)}_{1}&\cdots&\widehat{\tau}^{(k+i-1)}_{1}&\cdots&\tau^{(2i-2)}_{1}
\end{array}\right|,
\end{equation}
where $\tau_{1}$, $\widehat{\tau}_{1}$ are given by {$\tau_{1}=c_{11}=<\phi^{0}_{1}\phi^{0}_{1}e^{\lambda t}>=s^{-1}_{1}D_{1}$, $\widehat{\tau}_{1}=\widehat{c}_{11}=<2\phi^{0}_{1}\widehat{\phi}^{0}_{1}e^{\lambda t}>=s^{-1}_{1}\widehat{D}_{1}$}. From \eqref{220} and \eqref{221}, we know that there is a certain relationship between $\tau_{i}, \widehat{\tau}_{i}$ and $D_{i}, \widehat{D}_{i}$,
\begin{equation}\label{227}
\begin{cases}
\tau_{i}=\frac{1}{s^{i}_{1}}[\prod^{i-1}_{k=1}(\beta^{0}_{k})^{i-k}]D_{i},\\
\widehat{\tau}_{i}=\frac{1}{s^{i}_{1}}\prod^{i-1}_{k=1}[(i-k)\widehat{\beta}^{0}_{k}(\beta^{0}_{k})^{i-k-1}D_{i}+(\beta^{0}_{k})^{i-k}\widehat{D}_{i}],
\end{cases}
\end{equation}
where $\beta^{0}_{k}=\beta_{k}(0)$ and $\widehat{\beta}^{0}_{k}=\widehat{\beta}_{k}(0)$. From \eqref{227}, we know that $\tau_{i}$, $\widehat{\tau}_{i}$  turn out to the blow-ups with both $\alpha_{i}$, $\widehat{\alpha}_{i}$ and $\beta_{i}$, $\widehat{\beta}_{i}$ at $t_{0}$ for $D(t_{0})=\widehat{D}(t_{0})=0$. If there are some $k$ that make $\beta^{0}_{k}=\widehat{\beta}^{0}_{k}=0$, we can not decide $\beta_{k}$, $\widehat{\beta}_{k}$ in the form of \eqref{224}, the reason is that the original system will be split into many small subsystems.
 \section{The inverse scattering method}

 \ \ \ \   Next, we are going to use the inverse scattering method to give the expressions of $\phi_{i}(\lambda,t)$ and $\widehat{\phi}_{i}(\lambda,t)$, which is mainly used in the article of Kodama \cite{1}, and the time variable of $\Phi(t)$ can be obtained by the orthonormalization procedure of Szeg\"{o} \cite{11}. {According to the reference \cite{1}}, we know that $B=L-\mathrm{diag}(L)-2(L)_{<0},\ \widehat{B}=\widehat{L}-\mathrm{diag}(\widehat{L})-2(\widehat{L})_{<0}$. Due to the first two equations of \eqref{29}:$L\Phi=\Phi\Lambda$ and $\widehat{L}\Phi+L\widehat{\Phi}=\widehat{\Phi}\Lambda$, we get $B=\Phi\Lambda\Phi^{-1}-\mathrm{diag}(L)-2(L)_{<0},$ and $\widehat{B}=\Phi\Lambda\widehat{\Phi}^{-1}+\widehat{\Phi}\Lambda\Phi^{-1}-\mathrm{diag}(\widehat{L})-2(\widehat{L})_{<0}$. Then the latter two equations of \eqref{29} can be written as:
\begin{equation}\label{31}
\begin{cases}
\frac{d}{dt}\Phi=\Phi\Lambda-[\mathrm{diag}(L)+2(L)_{<0}]\Phi,\\
\frac{d}{dt}\widehat{\Phi}=\widehat{\Phi}\Lambda-[\mathrm{diag}(\widehat{L})+2(\widehat{L})_{<0}]\Phi+[\mathrm{diag}(L)+2(L)_{<0}]\widehat{\Phi}.
\end{cases}
\end{equation}
The elements of the $\Phi_{t}$ and $\widehat{\Phi}_{t}$ are expressed by the right side of the equations \eqref{31}, and the vectors $\phi_{i}(\lambda_{k},t)$, $\widehat{\phi}_{i}(\lambda_{k},t)$ ($k=1,...,N$) of the first line in $\Phi_{t}$ and $\widehat{\Phi}_{t}$ are given by
\begin{equation}\label{32}
\begin{cases}
\frac{d}{dt}\phi_{1}(\lambda_{k})=[\lambda_{k}-s_{1}{<\lambda\phi_{1}^{2}(\lambda)>]}\phi_{1}(\lambda_{k}),\\
\frac{d}{dt}\widehat{\phi}_{1}(\lambda_{k})=[\lambda_{k}-s_{1}{<\lambda\phi^{2}_{1}(\lambda)>]\widehat{\phi}_{1}(\lambda_{k})-2s_{1}<\lambda\widehat{\phi}_{1}(\lambda)\phi_{1}(\lambda)>\phi_{1}(\lambda_{k})}.
\end{cases}
\end{equation}
Then \eqref{32} can be readily solved in the form
\begin{equation}\label{33}
\begin{cases}
\phi_{1}(\lambda_{k},t)=\frac{\psi_{1}(\lambda_{k},t)}{\sqrt{s_{1}{<\psi_{1}^{2}(\lambda,t)}>}},\\
\widehat{\phi}_{1}(\lambda_{k},t)=\frac{\widehat{\psi}_{1}(\lambda_{k},t)}
{\sqrt{s_{1}{<\psi_{1}^{2}(\lambda,t)>}}-\frac{s_{1}\psi_{1}(\lambda_{k},t)<\psi_{1}(\lambda,t)\widehat{\psi}_{1}(\lambda,t)>}
{[s_{1}<\psi_{1}^{2}(\lambda,t)>]^{\frac{3}{2}}}},
\end{cases}
\end{equation}
with $\psi_{1}(\lambda_{k},t)=\phi_{1}^{0}(\lambda_{k})e^{\lambda t}$ and $\widehat{\psi}_{1}(\lambda_{k},t)=\widehat{\phi}_{1}^{0}(\lambda_{k})e^{\lambda t}$.
The elements of the second line in $\Phi_{t}$, $\widehat{\Phi}_{t}$ are expressed,
\begin{equation}\label{34}
\begin{cases}
\frac{d}{dt}\phi_{2}(\lambda_{k})=[\lambda_{k}-s_{2}{<\lambda\phi_{2}^{2}(\lambda)>]\phi_{2}(\lambda_{k})-2s_{1}<\lambda\phi_{2}(\lambda)\phi_{1}(\lambda)>\phi_{1}(\lambda_{k})},\\
\frac{d}{dt}\widehat{\phi}_{2}(\lambda_{k})=(\lambda_{k}-s_{2}{[<\lambda\phi^{2}_{2}(\lambda)>\widehat{\phi}_{2}(\lambda_{k})-2s_{2}<\lambda\widehat{\phi}_{2}(\lambda)\phi_{2}(\lambda)>]\phi_{2}(\lambda_{k})}\\
\ \ \ \ \ \ \ \ \ \ \ \ \ \ \ -2s_{1}{<\lambda\phi_{2}(\lambda)\phi_{1}(\lambda)>\phi_{1}(\lambda_{k})-2s_{1}<\lambda\phi_{2}(\lambda)\widehat{\phi}_{1}(\lambda)+\lambda\widehat{\phi}_{2}(\lambda)\phi_{1}(\lambda)>\phi_{1}(\lambda_{k})}.
\end{cases}
\end{equation}
Through the integration, the $\phi_{2}(\lambda_{k})$, $\widehat{\phi}_{2}(\lambda_{k})$ can be written as
\begin{equation}\label{35}
\begin{cases}
\phi_{2}(\lambda_{k},t)=\frac{\psi_{2}(\lambda_{k},t)}{\sqrt{s_{2}{<\psi_{2}^{2}(\lambda,t)}>}},\\
\widehat{\phi}_{2}(\lambda_{k},t)=\frac{\widehat{\psi}_{2}(\lambda_{k},t)}{\sqrt{s_{2}{<\psi_{2}^{2}(\lambda,t)>}}-
\frac{s_{2}\psi_{2}(\lambda_{k},t)<\psi_{2}(\lambda,t)\widehat{\psi}_{2}>(\lambda,t)}{[s_{2}<\psi_{1}^{2}(\lambda,t)>]^{\frac{3}{2}}}},
\end{cases}
\end{equation}
where
\begin{equation}\label{36}
\begin{cases}
\psi_{2}(\lambda_{k},t)=\phi_{2}^{0}(\lambda_{k})e^{\lambda t}-s_{1}{<\phi_{2}^{0}(\lambda)\phi_{1}(\lambda,t)e^{\lambda t}>}\phi_{1}(\lambda_{k},t),\\
\widehat{\psi}_{2}(\lambda_{k},t)=\widehat{\phi}_{2}^{0}(\lambda_{k})e^{\lambda t}-s_{1}[{<\phi_{2}^{0}(\lambda,t)\phi_{1}(\lambda,t)e^{\lambda t}>}\widehat{\phi}_{1}(\lambda_{k},t)\\
\ \ \ \ \ \ \ \ \ \ \ \ \ \ \
+s_{1}{<\phi_{2}^{0}(\lambda)\widehat{\phi}_{1}(\lambda,t)e^{\lambda t}+\widehat{\phi}_{2}^{0}(\lambda)\phi_{1}(\lambda,t)e^{\lambda t}>}]\phi_{1}(\lambda_{k},t).
\end{cases}
\end{equation}
Generally, the $i$th lines in \eqref{31} satisfy
\begin{equation}\label{37}
\begin{cases}
\frac{d}{dt}\phi_{i}(\lambda_{k})=[\lambda_{k}-s_{i}{<\lambda\phi_{i}^{2}(\lambda)>}]\phi_{i}(\lambda_{k},t)-2\sum_{j=1}^{i-1}[s_{j}{<\lambda\phi_{i}(\lambda)\phi_{j}(\lambda)>}]\phi_{j}(\lambda_{k},t),\\
\frac{d}{dt}\widehat{\phi}_{i}(\lambda_{k})=[\lambda_{k}-s_{i}{<\lambda\phi^{2}_{i}(\lambda)>\widehat{\phi}_{i}(\lambda_{k},t)-2s_{i}<\lambda\widehat{\phi}_{i}(\lambda)\phi_{i}(\lambda)>\phi_{i}(\lambda_{k},t)}\\
\ \ \ \ \ \ \ \ \ \ \ \ \ \ \
-2\sum_{j=1}^{i-1}[{<\lambda\phi_{i}(\lambda)\phi_{j}(\lambda)>\widehat{\phi}_{j}(\lambda_{k},t)+<\lambda\widehat{\phi}_{i}(\lambda)\phi_{j}(\lambda)+\lambda\phi_{i}(\lambda)\widehat{\phi}_{j}(\lambda)>}\phi_{j}(\lambda_{k},t)].
\end{cases}
\end{equation}
Then, it is the same with the {procedures} above,
\begin{equation}\label{38}
\begin{cases}
\phi_{i}(\lambda_{k})=\frac{\psi_{i}(\lambda_{k},t)}{\sqrt{s_{i}{<\psi_{i}^{2}(\lambda,t)>}}},\\
\widehat{\phi}_{i}(\lambda_{k})=\frac{\widehat{\psi}_{i}(\lambda_{k},t)}{\sqrt{s_{i}{<\psi_{i}(\lambda,t)\psi_{j}(\lambda,t)>}}
-\frac{s_{i}\psi_{i}(\lambda_{k},t)<\widehat{\psi}_{i}(\lambda,t)\psi_{i}(\lambda,t)>}{[s_{i}<\psi^{2}_{i}(\lambda,t)>]^{\frac{3}{2}}}},
\end{cases}
\end{equation}
\begin{equation}\label{39}
\begin{cases}
\psi_{i}(\lambda_{k},t)=\phi_{i}^{0}(\lambda_{k})e^{\lambda t}-\sum_{j=1}^{i-1}s_{j}{<\phi_{i}^{0}(\lambda)\phi_{j}(\lambda,t)e^{\lambda t}>}\phi_{j}(\lambda_{k},t),\\
\widehat{\psi}_{i}(\lambda_{k},t)=\widehat{\phi}_{i}^{0}(\lambda_{k})e^{\lambda t}-\sum_{j=1}^{i-1}s_{j}[{<\phi_{i}^{0}(\lambda)\phi_{j}(\lambda,t)e^{\lambda t}>}\widehat{\psi}_{i}(\lambda_{k},t)\\
\ \ \ \ \ \ \ \ \ \ \ \ \ \
+{<\widehat{\phi}_{i}^{0}(\lambda)\phi_{j}(\lambda,t)+\phi_{i}^{0}(\lambda)\widehat{\phi}_{j}(\lambda,t)>}e^{\lambda t}]\phi_{j}(\lambda_{k},t).
\end{cases}
\end{equation}
Now, we are going to give a further proof, there are two linear equations given by
\begin{equation}\label{310}
\begin{cases}
\Phi=T\Psi,\\
\widehat{\Phi}=\widehat{T}\Psi+T\widehat{\Psi},
\end{cases}
\end{equation}
then we give
\begin{equation}\label{311}
\Psi=\left(\begin{matrix}
\psi_{1}(\lambda_{1})&\cdots&\psi_{1}(\lambda_{N})\\
\psi_{2}(\lambda_{1})&\cdots&\psi_{2}(\lambda_{N})\\
 & &\vdots\\
\psi_{N}(\lambda_{1})&\cdots&\psi_{N}(\lambda_{N})\\
\end{matrix}\right),
\end{equation}
\begin{equation}\label{311a}
\widehat{\Psi}=\left(\begin{matrix}
\widehat{\psi}_{1}(\lambda_{1})&\cdots&\widehat{\psi}_{1}(\lambda_{N})\\
\widehat{\psi}_{2}(\lambda_{1})&\cdots&\widehat{\psi}_{2}(\lambda_{N})\\
 & &\vdots\\
\widehat{\psi}_{N}(\lambda_{1})&\cdots&\widehat{\psi}_{N}(\lambda_{N})\\
\end{matrix}\right),
\end{equation}
and
\begin{equation}\label{312}
\begin{cases}
T=\mathrm{diag}[s_{i}\left<\psi_{i}^{2}\right>]^{-\frac{1}{2}},\  i=1, 2 ,\cdots, N,\\
\widehat{T}=\mathrm{diag}\big(-\frac{s_{i}\left<\widehat{\psi}_{i}\psi_{i}\right>}{[s_{1}\left<\psi^{2}_{i}\right>]^{\frac{3}{2}}}\big),\  i=1, 2 ,\cdots, N.
\end{cases}
\end{equation}
According to \eqref{310}, the expressions {of the equations \eqref{28}} can be expressed as
\begin{equation}\label{313}
\begin{cases}
(T^{-1}LT)\Psi=\Psi\Lambda,\\
(T^{-1}LT)\widehat{\Psi}+(\widehat{T}^{-1}LT+T^{-1}\widehat{L}T+T^{-1}L\widehat{T})\Psi=\widehat{\Psi}\Lambda,\\
\frac{d}{dt}\Psi=(T^{-1}BT)\Psi-(\frac{d}{dt}\mathrm{\log}T)\Psi,\\
\frac{d}{dt}\widehat{\Psi}=(T^{-1}BT)\widehat{\Psi}+(\widehat{T}^{-1}BT+T^{-1}\widehat{B}T+T^{-1}B\widehat{T})\Psi-(\frac{d}{dt}(\mathrm\log \widehat{T}\Psi+\mathrm\log T\widehat{\Psi}).
\end{cases}
\end{equation}
From \eqref{31}, we find
\begin{equation}\label{314}
\begin{cases}
T^{-1}BT=-2(T^{-1}LT)_{<0}+T^{-1}LT-(T^{-1}\mathrm{diag}(L)T),\\
\widehat{T}^{-1}BT+T^{-1}\widehat{B}T+T^{-1}B\widehat{T}=-2(\widehat{T}^{-1}LT+T^{-1}\widehat{L}T+T^{-1}L\widehat{T})_{<0}+\widehat{T}^{-1}LT\\
\ \ \ \ \ \ \ \ \ \
+T^{-1}L\widehat{T}-(\widehat{T}^{-1}\mathrm{diag}(L)T+T^{-1}\mathrm{diag}(\widehat{L})T+T^{-1}\mathrm{diag}(L)\widehat{T}).
\end{cases}
\end{equation}
Further, from \eqref{313}, we have
\begin{equation}\label{315}
\begin{cases}
\frac{d\psi}{dt}=-2(T^{-1}LT)_{<0}\psi+\lambda\psi-(\mathrm{diag}(L)+\frac{d}{dt}\log T)\psi,\\
\frac{d\widehat{\psi}}{dt}=-2[(\widehat{T}^{-1}LT)+(TL\widehat{T})]_{<0}\psi-2(T^{-1}LT)_{<0}\widehat{\psi}\\
\ \ \ \ \ \ \ \
+\lambda\widehat{\psi}-(\mathrm{diag}(L)+\frac{d}{dt}\log T)\widehat{\psi}-(\frac{d}{dt}\frac{\widehat{T}}{T})\psi,
\end{cases}
\end{equation}
where the structure of $\widehat{T}$ is similar to $T$. According to \eqref{315}, the solutions with $\psi$, $\widehat{\psi}$ expressed as
\begin{equation}\label{f}
\begin{cases}
\frac{d\psi_{i}}{dt}=-2\sum_{j=1}^{i-1}\frac{{<\lambda\psi_{i}\psi_{j}>}{<\psi_{j}^{2}>}}\psi_{j}+\lambda\psi_{i},\\
\frac{d\widehat{\psi}_{i}}{dt}=-2\sum_{j=1}^{i-1}{<\lambda\psi_{i}\psi_{j}>\frac{-2<\widehat{\psi}_{j}\psi_{j}>}{[<\psi_{j}^{2}>]^{2}}\psi_{j}
+\frac{<\lambda\psi_{i}\widehat{\psi}_{j}>+<\lambda\widehat{\psi}_{i}\psi_{j}>}{<\psi_{j}^{2}>}\psi_{j}
+\frac{<\lambda\psi_{i}\psi_{j}>}{<\psi_{j}^{2}>}}\widehat{\psi}_{j}+\lambda\widehat{\psi}_{j},
\end{cases}
\end{equation}
and
\begin{equation}\label{316}
\begin{cases}
\frac{1}{2}\frac{d}{dt}\mathrm\log{<\psi_{i}^{2}>=s_{j}<\lambda\phi_{i}^{2}>}=a_{ij},\\
\frac{1}{2}\frac{d}{dt}\mathrm\log2{<\psi_{i}\widehat{\psi}_{i}>=s_{j}[<\lambda\widehat{\phi}_{i}\phi_{i}>+<\lambda\widehat{\phi}_{j}\phi_{i}>}]=\widehat{a}_{ij}.
\end{cases}
\end{equation}
Obviously, \eqref{316} can be evolved from \eqref{f}. And through simple calculation,
\begin{equation}\label{317}
\begin{cases}
\psi(\lambda,t)=Q(t)\phi^{0}(\lambda)e^{\lambda t},\\
\widehat{\psi}(\lambda,t)=[\widehat{Q}(t)\phi^{0}(\lambda)+Q(t)\widehat{\phi}^{0}(\lambda)]e^{\lambda t},
\end{cases}
\end{equation}
where $Q(t)$, $\widehat{Q}(t)$ are  lower triangular matrices and $\phi^{0}(\lambda)=\phi(\lambda,0)$, $\widehat{\phi}^{0}(\lambda)=\widehat{\phi}(\lambda,0)$.
In fact, we have to chose the initial conditions of $\psi(\lambda,t)$, $\widehat{\psi}(\lambda,t)$ which {satisfy} $\psi(\lambda,0)=\phi^{0}(\lambda)$, $\widehat{\psi}(\lambda,0)=\widehat{\phi}^{0}(\lambda)$, and $s_{i}<\psi_{i}\psi_{j}>=s_{i}<\phi^{0}_{i}\phi^{0}_{j}>=\delta_{ij}(t=0)$, $s_{i}<\widehat{\phi}^{0}_{i}\phi^{0}_{j}+\phi^{0}_{i}\widehat{\phi}^{0}_{j}>=0(t=0)$. For the  ``orthogonality'' relations of \eqref{211} and \eqref{212}, it show that the $<\psi_{i}\psi_{j}>=0$, $<\widehat{\psi}_{i}\psi_{j}+\psi_{i}\widehat{\psi}_{j}>=0$ for $i\neq j$,  so the ``orthogonality'' relations can be written like this:
\begin{equation}\label{318}
\begin{cases}
<\psi_{i}\phi^{0}_{j}e^{\lambda t}>=\sum_{k=1}^{N}s_{k}^{-1}<\psi_{i}(\lambda_{k},t)\phi^{0}_{j}(\lambda_{k})e^{\lambda t}>=0,\\
<\widehat{\psi}_{i}\phi^{0}_{j}e^{\lambda t}>+<\psi_{i}\widehat{\phi}^{0}_{j}e^{\lambda t}>=\sum_{k=1}^{N}\left(s_{k}^{-1}<\widehat{\psi}_{i}\phi^{0}_{j}e^{\lambda t}>+s_{k}^{-1}<\psi_{i}\widehat{\phi}^{0}_{j}e^{\lambda t}>\right)=0.
\end{cases}
\end{equation}
The solutions between $\psi_{i}(\lambda,t)$ and $\widehat{\psi}_{i}(\lambda,t)$ of \eqref{316} are given from \cite{1},
\begin{equation}\label{319}
\psi_{i}(\lambda,t)=\frac{e^{\lambda t}}{D_{i-1}(t)}\left|\begin{array}{ccccc}
s_{1}c_{11}&s_{2}c_{12}&\cdots&s_{i-1}c_{1,i-1}&\phi_{1}^{0}(\lambda)\\
\vdots&\vdots&\ddots&\vdots&\vdots\\
s_{1}c_{i1}&s_{2}c_{i2}&\cdots&s_{i-1}c_{i,i-1}&\phi_{i}^{0}(\lambda)
\end{array}\right|,
\end{equation}
where $c_{ij}=\left<\phi^{0}_{i}\phi^{0}_{j}e^{\lambda t}\right>$, and the elements of $D_{k}(t)$
are $s_{j}c_{ij}(t)$, $i=1, 2,..., N$,
\begin{equation}\label{320}
D_{k}(t)=\left|(s_{i}c_{ij}(t))_{1\leq i,j\leq k}\begin{array}{ccccc}
\end{array}\right|.
\end{equation}
\begin{lemma}
According to \eqref{f}, the $\widehat{\psi}_{i}$ can be expressed
\begin{equation}\label{321}
\begin{aligned}
\widehat{\psi}_{i}(\lambda,t)&=\frac{e^{\lambda t}}{D_{i-1}(t)}\sum^{i-1}_{k=1}\left|\begin{array}{ccccc}
s_{1}c_{11}&\cdots&s_{k}\widehat{c}_{1,k}&\cdots&\phi_{1}^{0}(\lambda)\\
\vdots&\vdots&\ddots&\vdots&\vdots\\
s_{1}c_{i1}&\cdots&s_{k}\widehat{c}_{i,k}&\cdots&\phi_{i}^{0}(\lambda)
\end{array}\right|+\frac{e^{\lambda t}}{D_{i-1}(t)}\left|\begin{array}{ccccc}
s_{1}c_{11}&\cdots&\widehat{\phi}_{1}^{0}(\lambda)\\
\vdots&\vdots&\vdots\\
s_{1}c_{i1}&\cdots&\widehat{\phi}_{i}^{0}(\lambda)
\end{array}\right|\\
&-\frac{\widehat{D}_{i-1}(t)e^{\lambda t}}{D_{i-1}^{2}(t)}\left|\begin{array}{ccccc}
s_{1}c_{11}&\cdots&s_{i-1}c_{1,i-1}&\phi_{1}^{0}(\lambda)\\
\vdots&\ddots&\vdots&\vdots\\
s_{1}c_{i1}&\cdots&s_{i-1}c_{i,i-1}&\phi_{i}^{0}(\lambda)
\end{array}\right|,
\end{aligned}
\end{equation}
where $\widehat{c}_{ij}(t)={<\widehat{\phi}^{0}_{i}\phi^{0}_{j}e^{\lambda t}>+<\phi^{0}_{i}\widehat{\phi}^{0}_{j}e^{\lambda t}>}$, and $k$ represents the number of columns in the determinant above,
\begin{equation}\label{322}
\widehat{D}_{k}(t)=\sum^{i}_{k=1}\left|\begin{array}{ccccc}
s_{1}c_{11}&\cdots&s_{k}\widehat{c}_{1,k}&\cdots&s_{i}c_{1,i}\\
\vdots&\vdots&\vdots&\vdots\\
s_{1}c_{i1}&\cdots&s_{k}\widehat{c}_{i,k}&\cdots&s_{i}c_{i,i}\\
\end{array}\right|.
\end{equation}
\end{lemma}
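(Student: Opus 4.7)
The plan is to view the ``hat'' operation as a formal derivation applied to the unhatted data, and to transfer this derivation through the determinantal formula \eqref{319} for $\psi_i(\lambda,t)$. Concretely, I introduce a formal parameter $\epsilon$ and the deformed quantities $c_{ij}^{\epsilon}:=c_{ij}+\epsilon\widehat{c}_{ij}$ and $\phi_k^{0,\epsilon}(\lambda):=\phi_k^{0}(\lambda)+\epsilon\widehat{\phi}_k^{0}(\lambda)$, and let $\psi_i^{\epsilon}$ denote the function produced by the same Gram--Schmidt construction \eqref{319} from the deformed data, together with the Gram determinant $D_{i-1}^{\epsilon}$. The coupled system \eqref{f} is organized so that at order $\epsilon^{0}$ it reduces to the first (unhatted) equation for $\psi_i$, while at order $\epsilon^{1}$ it reduces precisely to the second (hatted) equation; hence $\widehat{\psi}_i=\partial_\epsilon\psi_i^\epsilon|_{\epsilon=0}$ is the unique solution of the hatted Cauchy problem with initial condition $\widehat{\phi}_i^0(\lambda)$, so it is enough to evaluate this derivative of the explicit formula.

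Differentiating \eqref{319} in $\epsilon$ splits into three contributions, one from each $\epsilon$-dependent factor. By multilinearity of the determinant in its columns, differentiating the first $(i-1)$ columns $(s_k c_{ik})$ produces a sum $\sum_{k=1}^{i-1}$ of determinants in which the $k$-th column has been replaced by $(s_k\widehat{c}_{ik})$, yielding the first sum on the right of \eqref{321}. Differentiating the terminal column $(\phi_k^0(\lambda))$ produces the second determinant of \eqref{321}, with $\phi_k^0$ replaced by $\widehat{\phi}_k^0$. Differentiating the prefactor uses $\partial_\epsilon(1/D_{i-1}^\epsilon)|_{\epsilon=0}=-\widehat{D}_{i-1}/D_{i-1}^{2}$, which accounts for the last determinant with the overall minus sign; the factor $e^{\lambda t}$ is inert under the hat, so no other terms appear.

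To close the argument I would check that the right-hand side of \eqref{321} satisfies the second equation of \eqref{f} together with the initial condition $\widehat{\psi}_i(\lambda,0)=\widehat{\phi}_i^0(\lambda)$, and then invoke uniqueness of the Cauchy problem. The initial condition is immediate, since at $t=0$ one has $c_{ij}=s_i^{-1}\delta_{ij}$, $\widehat{c}_{ij}=0$, $D_{i-1}=1$ and $\widehat{D}_{i-1}=0$, so both the $\widehat{c}$-sum and the $\widehat{D}_{i-1}$-term vanish and the surviving terminal determinant collapses to $\widehat{\phi}_i^0(\lambda)$. The ODE check is then automatic, since it is the $O(\epsilon)$ part of the already-established ODE for $\psi_i^\epsilon$. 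The main obstacle will be the combinatorial bookkeeping required to confirm that the ``orthogonality'' relations \eqref{318} for $\widehat{\psi}_i$ are exactly the first-order part of the $\epsilon$-orthogonality of $\psi_i^\epsilon$ with respect to the deformed inner product, so that the derivation interpretation is internally self-consistent; once this compatibility is secured, the column-by-column expansion of $\partial_\epsilon$ on the determinant in \eqref{319} delivers \eqref{321} with no further nontrivial computation.
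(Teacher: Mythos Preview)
Your proposal is correct but proceeds along a different route from the paper. The paper does not introduce a deformation parameter; instead it starts from the ansatz \eqref{317}, $\widehat{\psi}(\lambda,t)=[\widehat{Q}(t)\phi^{0}(\lambda)+Q(t)\widehat{\phi}^{0}(\lambda)]e^{\lambda t}$ with $Q,\widehat{Q}$ lower triangular, plugs it into the orthogonality relations \eqref{318}, and solves the resulting linear system $\sum_{k}Q_{ik}c_{kj}=0$, $\sum_{k}[\widehat{Q}_{ik}c_{kj}+Q_{ik}\widehat{c}_{kj}]=0$ for $Q_{ik},\widehat{Q}_{ik}$ by Cramer's rule, obtaining $Q_{ik}=-D^{k}_{i-1}/D_{i-1}$ and $\widehat{Q}_{ik}=-\widehat{D}^{k}_{i-1}/D_{i-1}+\widehat{D}_{i-1}D^{k}_{i-1}/D_{i-1}^{2}$; substituting back and expanding along the last column yields \eqref{321}. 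Your $\epsilon$-derivation argument is more conceptual: it explains \emph{why} \eqref{321} is obtained from \eqref{319} by the Leibniz rule (column-by-column replacement plus the $-\widehat{D}_{i-1}/D_{i-1}^{2}$ from the prefactor), and it transports the already-proved ODE for $\psi_i$ to the hatted equation without redoing any linear algebra; it also points directly to how the construction generalizes (cf.\ the $Z_n$ case in Section~6). The paper's approach, in turn, is self-contained at the level of Cramer's rule and does not rely on recognizing the hat as a derivation; it also makes the individual coefficients $\widehat{Q}_{ik}$ explicit, which is what one actually needs if one wants to read off $\langle\psi_i^{2}\rangle$ and $\langle 2\widehat{\psi}_i\psi_i\rangle$ in terms of $D_i,\widehat{D}_i$ as in \eqref{326}.
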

\begin{proof}
From \eqref{318} and \eqref{317}, we have
\begin{equation}\label{323}
\begin{cases}
s_{j}\sum_{k=1}^{i}Q_{ik}c_{kj}(t)=0,\\
s_{j}\sum_{k=1}^{i}[\widehat{Q}_{ik}c_{kj}(t)+Q_{ik}\widehat{c}_{kj}(t)]=0, \hspace{1cm} 1\leq j\leq i-1.
\end{cases}
\end{equation}
Solving \eqref{323} for $Q_{ik}$ and $\widehat{Q}_{ik}$, we have
\begin{equation}\label{324}
\begin{cases}
Q_{ik}=-\frac{D^{k}_{i-1}(t)}{D_{i-1}(t)},\\
\widehat{Q}_{ik}=-\frac{\widehat{D}^{k}_{i-1}(t)}{D_{i-1}(t)}+\frac{\widehat{D}_{i-1}(t)D^{k}_{i-1}(t)}{D^{2}_{i-1}(t)}.
\end{cases}
\end{equation}
In fact, $\widehat{D}^{k}_{i-1}(t)(D^{k}_{i-1}(t)$) is the substitution of the $k$th and the $i$th row of $\widehat{D}_{i-1}(t)(D_{i-1}(t))$. From \eqref{317}, we have
\begin{align}\label{325}
\widehat{\psi}_{i}&=e^{\lambda t}\sum_{k=1}^{i}(Q_{ik}\widehat{\phi}^{0}_{k}+\widehat{Q}_{ik}\phi^{0}_{k})\\
&=-e^{\lambda t}\sum_{k=1}^{i-1}\left(\frac{\widehat{D}_{i-1}(t)D^{k}_{i-1}(t)-D_{i-1}(t)\widehat{D}^{k}_{i-1}(t)}{D^{2}_{i-1}(t)}\phi^{0}_{k}
+e^{\lambda t}\frac{D^{k}_{i-1}(t)}{D_{i-1}(t)}\widehat{\phi}^{0}_{k}\right)+e^{\lambda t}\frac{D_{i-1}(t)}{D_{i-1}(t)}\widehat{\phi}^{0}_{i}\\
&=\frac{e^{\lambda t}}{D_{i-1}}\left(\sum^{i-1}_{k=1}(-1)^{i+k}\widehat{\phi}^{0}_{k}\left|\begin{array}{ccccc}
s_{1}c_{11}&\cdots&s_{1}c_{1i}&\cdots&s_{1}c_{1,i-1}\\
\vdots&\cdots&\vdots&\cdots&\vdots\\
s_{i-1}c_{i-1,1}&\cdots&s_{i-1}c_{i-1,i}&\cdots&s_{i-1}c_{i-1,i-1}
\end{array}\right|+\widehat{\phi}^{0}_{i}D_{i-1}(t)\right)\\
&-e^{\lambda t}\sum_{k=1}^{i-1}\frac{\widehat{D}_{i-1}(t)D^{k}_{i-1}(t)-D_{i-1}(t)\widehat{D}^{k}_{i-1}(t)}{D^{2}_{i-1}(t)}\phi^{0}_{k},
\end{align}

which is just \eqref{321}. From \eqref{319} and \eqref{321}, $<\psi^{2}_{i}>$, $<2\widehat{\psi}_{i}\psi_{i}>$ can be expressed with $D_{i}$ and $\widehat{D}_{i}$
\begin{equation}\label{326}
\begin{cases}
<\psi^{2}_{i}>=\frac{D_{i}}{s_{i}D_{i-1}},\\
<2\widehat{\psi}_{i}\psi_{i}>=\frac{D_{i-1}\widehat{D}_{i}-\widehat{D}_{i-1}D_{i}}{s_{i}D^{2}_{i-1}}.
\end{cases}
\end{equation}
Then we can obtain the formulas that we have mentioned above,
\begin{equation}\label{327}
\phi_{i}(\lambda,t)=\frac{e^{\lambda t}}{\sqrt{D_{i}(t)D_{i-1}}(t)}\left|\begin{array}{ccccc}
s_{1}c_{11}&s_{2}c_{12}&\cdots&s_{i-1}c_{1,i-1}&\phi_{1}^{0}(\lambda)\\
\vdots&\vdots&\ddots&\vdots&\vdots\\
s_{1}c_{i1}&s_{2}c_{i2}&\cdots&s_{i-1}c_{i,i-1}&\phi_{i}^{0}(\lambda)
\end{array}\right|,
\end{equation}
\begin{equation}\label{328}
\begin{aligned}
\widehat{\phi}_{i}(\lambda,t)&=\frac{e^{\lambda t}}{[D_{i}(t)D_{i-1}(t)]^{\frac{1}{2}}}\left(\sum^{i-1}_{k=1}\begin{vmatrix}
s_{1}c_{1,i-1}&\cdots&s_{k}\widehat{c}_{1,k}&\cdots&s_{i-1}c_{1,i-1}\\
\vdots&\vdots&\ddots&\ddots&\vdots\\
s_{1}c_{1,i-1}&\cdots&s_{k}\widehat{c}_{i,k}&\cdots&s_{i-1}c_{i,i-1}
\end{vmatrix}+\left|\begin{array}{ccccc}
s_{1}c_{11}&\cdots&\widehat{\phi}_{1}^{0}(\lambda)\\
\vdots&\vdots&\vdots\\
s_{1}c_{i1}&\cdots&\widehat{\phi}_{i}^{0}(\lambda)
\end{array}\right|\right)\\&-\frac{[\widehat{D}_{i}(t)D_{i-1}(t)+D_{i}(t)\widehat{D}_{i-1}(t)]e^{\lambda t}}{2[D_{i}(t)D_{i-1}(t)]^{\frac{3}{2}}}\left|\begin{array}{ccccc}
s_{1}c_{11}&\cdots&\phi_{1}^{0}(\lambda)\\
\vdots&\ddots&\vdots\\
s_{1}c_{i1}&\cdots&\phi_{i}^{0}(\lambda)
\end{array}\right|.
\end{aligned}
\end{equation}
\end{proof}
By using the formulas \eqref{328} and \eqref{327},  we {get the solutions} for \eqref{29}. The method above is similar to szeg\"{o} \cite{11}, and it is also equivalent to the procedure of Gram-Schmidt \cite{7}.
\subsection{Example}
 \ \ \ \ Next, we give a simple example to verify our results, and talk about the properties of the solutions. Let $L_{2}$, $\widehat{L}_{2}$ are a $2\times2$ matrix with $S=\mathrm{diag}(1,-1)$, and {$L_{2}$, $\widehat{L}_{2}$} are given by
\begin{equation}\label{41}
L_{2}=\left(\begin{matrix}
a_{1}&-b_{1}\\
b_{1}&-a_{2}\\
\end{matrix}
\right)\\,
\end{equation}
\begin{equation}\label{41a}
\widehat{L}_{2}=\left(\begin{matrix}
\widehat{a}_{1}&-\widehat{b}_{1}\\
\widehat{b}_{1}&-\widehat{a}_{2}\\
\end{matrix}
\right)\\.
\end{equation}
According to \eqref{2.3} and \eqref{2.4}, we have
\begin{equation}\label{42}
\begin{cases}
\frac{da_{1}}{dt}=-\frac{1}{2}b^{2}_{1}-\frac{1}{2}\widehat{b}^{2}_{1},\\
\frac{d\widehat{a}_{1}}{dt}=-b_{1}\widehat{b}_{1},\\
\frac{db_{1}}{dt}=-\frac{1}{4}b_{1}(a_{1}+a_{2})-\frac{1}{4}\widehat{b}_{1}(\widehat{a}_{1}+\widehat{a}_{2}),\\
\frac{d\widehat{b}_{1}}{dt}=-\frac{1}{4}[(a_{1}+a_{2})\widehat{b}_{1}+(\widehat{a}_{1}+\widehat{a}_{2})b_{1}],\\
\frac{da_{2}}{dt}=-\frac{1}{2}b^{2}_{1},\\
\frac{d\widehat{a}_{2}}{dt}=-b_{1}\widehat{b}_{1}-\frac{1}{2}\widehat{b}^{2}_{1},\\
\end{cases}
\end{equation}
note that, $b_{2}=\widehat{b}_{2}=0$. From \eqref{42}, if the initial value of ($a_{1}+a_{2}$) is positive, one can find $(a_{1}+a_{2})\longrightarrow\infty$ , then $b_{1}$, $\widehat{b}_{1}$ increase bigger and faster, and the corresponding result is that $\widehat{a}_{1}\longrightarrow\infty$ . On the contrary, if the initial value of $a_{1}+a_{2}\longrightarrow-\infty$, then $b_{1}$ and $\widehat{b}_{1}\longrightarrow 0$ .

The following we give some special values to $a_{i}$, $\widehat{a}_{i}$, $b_{i}$ and $\widehat{b}_{i}$ for (i=1, 2), and keep one parameter $m$ in $L$, then we discuss its eigenvalues and eigenvectors of $L$ and their properties at the same time. While $a_{1}$, $\widehat{a}_{1}$ and $\widehat{b}_{1}$ are equal to $0$, take $b_{1}$ and $\widehat{a}_{2}$  are equal to $1$, and take $a_{2}$ as a parameter, then we have
\begin{equation}\label{43}
L_{2}=\left(\begin{matrix}
0&-1\\
1&-m\\
\end{matrix}
\right)\\,
\end{equation}
\begin{equation}\label{43a}
\widehat{L}_{2}=\left(\begin{matrix}
0&0\\
0&-1 \\
\end{matrix}
\right)\\.
\end{equation}
And what is  more, we can get the eigenvalues and eigenvectors of the particular matrix, which the specific forms are given as follow,
\begin{equation}\label{44}
\begin{cases}
\lambda_{1}=\frac{1}{2}(\sqrt{m^{2}-4}-m),\\
\lambda_{2}=\frac{1}{2}(-\sqrt{m^{2}-4}-m),\\
\widehat{\lambda}_{1}=0,\\
\widehat{\lambda}_{2}=-1,
\end{cases}
\end{equation}
\begin{equation}\label{45}
\Phi^{0}_{2}=\left(\begin{matrix}
\lambda_{1}&\lambda_{2}\\
-1&-1\\
\end{matrix}
\right)\\,
\end{equation}
\begin{equation}\label{45a}
\widehat{\Phi}^{0}_{2}=\left(\begin{matrix}
1&0\\
0&1\\
\end{matrix}
\right)\\,
\end{equation}
where $\lambda_{i}$ and $\widehat{\lambda}_{i} (i=1, 2)$ are characteristic value. Then we discuss the value of $m$ below, different values of $m$ produce different results: (1) when $m\geq2$, then $0\geq\lambda_{1}\geq\lambda_{2}$; (2) when $\mid m\mid<2$, then $\lambda_{1}$ and $\lambda_{2}$ are complex; (3) when $m\leq-2$, then $\lambda_{2}>\lambda_{1}>0$, $\lambda_{2}$ and $\lambda_{1}$ are real, that is exactly the cases we need. According to \eqref{44}, \eqref{45} and \eqref{45a}, we can get some specific results:
\begin{equation}\label{46}
\Phi_{2}(t)=\frac{1}{e^{(\lambda_{2}+2\lambda_{1})t}}\left(\begin{matrix}
\lambda_{1}e^{\lambda_{1}t}&\lambda_{2}e^{\lambda_{2}t}\\
-e^{\lambda_{1}t}&-e^{\lambda_{2}t}\\
\end{matrix}
\right)\\,
\end{equation}
\begin{equation}\label{46a}
\widehat{\Phi}_{2}(t)=\left(\begin{matrix}
\frac{2\lambda_{1}+1}{e^{\frac{1}{2}\lambda_{1}t}}+\frac{2\lambda_{1}+e^{\lambda_{1}t}}{(2e^{\lambda_{1}t})^{\frac{3}{2}}}&(\frac{2\lambda_{1}+1}{e^{\frac{1}{2}\lambda_{1}t}}+\frac{2\lambda_{1}+e^{\lambda_{1}t}}{(2e^{\lambda_{1}t})^{\frac{3}{2}}})e^{t}\\
2\frac{(\lambda^{2}_{1}+1)e^{(\lambda_{1}+1)}-(\lambda_{2}+\lambda_{1})e^{\lambda_{2}t}}{[(\lambda_{1}+\lambda_{2}+2)e^{(2\lambda_{1}+\lambda_{2})t}]^{\frac{1}{2}}}&2\frac{(\lambda^{2}_{1}+1)e^{(\lambda_{1}+1)}-(\lambda_{2}+\lambda_{1})e^{\lambda_{2}t}}{[(\lambda_{1}+\lambda_{2}+2)e^{(2\lambda_{1}+\lambda_{2})t}]^{\frac{1}{2}}}e^{t}\\
\end{matrix}
\right)=\left(\begin{matrix}
\widehat{\Phi}_{1,1}&\widehat{\Phi}_{1,2}\\
\widehat{\Phi}_{2,1}&\widehat{\Phi}_{2,2}\\
\end{matrix}
\right)\\,
\end{equation}
where $\widehat{\Phi}_{i,j}(t) (1\leq i,j\leq2)$ are obtained from \eqref{46a}. The solutions of the extended Toda equations are obtained from \eqref{218} and \eqref{219},
\begin{equation}\label{47}
L_{2}(t)=\left(\begin{matrix}
\lambda_{2}e^{2(\lambda_{1}+\lambda_{2})t}+\lambda_{1}e^{4\lambda_{1}t}&-e^{2(\lambda_{1}+\lambda_{2})t}-e^{4\lambda_{1}t}\\
-e^{2(\lambda_{1}+\lambda_{2})t}-e^{4\lambda_{1}t}&\lambda_{1}e^{2(\lambda_{1}+\lambda_{2})t}+\lambda_{2}e^{4\lambda_{1}t}\\
\end{matrix}
\right)\\,
\end{equation}
\begin{equation}\label{47a}
\widehat{L}_{2}(t)=\frac{1}{e^{(\lambda_{2}+2\lambda_{1})t}}\left(\begin{matrix}
\widehat{a}_{1,1}(t)&\widehat{a}_{1,2}(t)\\
\widehat{a}_{2,1}(t)&\widehat{a}_{2,2}(t)\\
\end{matrix}
\right)\\,
\end{equation}
where
\begin{equation}
\begin{aligned}
\widehat{a}_{1,1}(t)=[\lambda_{1}e^{(3\lambda_{1}+2\lambda_{2})t}+\lambda_{1}e^{4\lambda_{1}(t)}]\widehat{\Phi}_{2,1}(t)-[e^{(3\lambda_{1}+2\lambda_{2})t}+\lambda^{2}_{1}e^{5\lambda_{1}(t)}]\widehat{\Phi}_{1,1}(t)&\\+[\lambda_{2}e^{(3\lambda_{1}+2\lambda_{2})t}+\lambda_{1}e^{5\lambda_{1}(t)}-e^{\lambda_{1}t}]\widehat{\Phi}_{1,2}(t)
-[e^{(3\lambda_{1}+2\lambda_{2})t}+\lambda^{2}_{1}e^{5\lambda_{1}(t)}]\widehat{\Phi}_{2,2}(t),
\end{aligned}
\end{equation}
\begin{equation}
\begin{aligned}
\widehat{a}_{1,2}(t)=[\lambda_{2}e^{(3\lambda_{2}+2\lambda_{1})t}+\lambda_{2}e^{(4\lambda_{1}+\lambda_{2})(t)}]\widehat{\Phi}_{2,1}(t)-[\lambda^{2}_{2}e^{(3\lambda_{2}+2\lambda_{1})t}+e^{(4\lambda_{1}+\lambda_{2})t}]\widehat{\Phi}_{1,1}(t)&\\+[\lambda_{2}e^{(3\lambda_{2}+2\lambda_{1})t}+\lambda_{2}e^{(4\lambda_{1}+\lambda_{2})t}]\widehat{\Phi}_{2,2}(t)
-[\lambda^{2}_{2}e^{(3\lambda_{2}+\lambda_{1})t}+e^{(4\lambda_{1}+\lambda_{2})t}-\lambda_{2}e^{\lambda_{2}t}]\widehat{\Phi}_{1,2}(t),
\end{aligned}
\end{equation}
\begin{equation}
\begin{aligned}
\widehat{a}_{2,1}(t)=[\lambda_{1}e^{(3\lambda_{1}+2\lambda_{2})t}+\lambda_{1}e^{5\lambda_{1}(t)}]\widehat{\Phi}_{1,1}(t)-[\lambda^{2}_{1}e^{(3\lambda_{1}+2\lambda_{2})t}+e^{5\lambda_{1}(t)}]\widehat{\Phi}_{2,1}(t)&\\+[\lambda_{1}e^{(3\lambda_{1}+2\lambda_{2})t}+\lambda_{2}e^{5\lambda_{1}(t)}-e^{\lambda_{1}t}]\widehat{\Phi}_{2,2}(t)-[e^{(3\lambda_{1}+2\lambda_{2})t}+e^{5\lambda_{1}(t)}]\widehat{\Phi}_{1,2}(t),
\end{aligned}
\end{equation}
\begin{equation}
\begin{aligned}
\widehat{a}_{2,2}(t)=[\lambda_{2}e^{(3\lambda_{2}+2\lambda_{1})t}+\lambda_{2}e^{(4\lambda_{1}+\lambda_{2})(t)}]\widehat{\Phi}_{1,1}(t)-[e^{(3\lambda_{2}+2\lambda_{1})t}+\lambda^{2}_{2}e^{(4\lambda_{1}+\lambda_{2})t}]\widehat{\Phi}_{2,1}(t)&\\+[\lambda_{1}e^{(3\lambda_{2}+2\lambda_{1})t}+\lambda_{2}e^{(4\lambda_{1}+\lambda_{2})t}-e^{\lambda_{2}t}]\widehat{\Phi}_{2,2}(t)
-[e^{(3\lambda_{2}+2\lambda_{1})t}+e^{(4\lambda_{1}+\lambda_{2})t}]\widehat{\Phi}_{1,2}(t).
\end{aligned}
\end{equation}
In addition,  according to \eqref{46a}, there is a situation that is $m=\pm2$, one of the things to pay attention is that $L_{2}(t)\nrightarrow \pm$ $\mathrm{diag}(1, 1)$.
\section{Strongly coupled Toda lattices with indefinite metrics}

 \ \ \ \   In this section, we introduce a new strongly coupled Toda lattices with indefinite metrics. For the Hamiltonian \eqref{a2}, we give a extended transformation of variables,
\begin{equation}\label{6.1}
\begin{cases}
s_{k}a_{k}=-\frac{y_{k}}{2},\\
s_{k}\widetilde{a}_{k}=-\frac{\widetilde{y}_{k}}{2},k=1,\ldots, N,
\end{cases}
\end{equation}
\begin{equation}\label{6.2}
\begin{cases}
b_{k}=\frac{1}{2}\exp(\frac{x_{k}-x_{k+1}}{2})\cosh(\frac{\widetilde{x}_{k}-\widetilde{x}_{k+1}}{2}),\\
\widetilde{b}_{k}=\frac{1}{2}\exp(\frac{x_{k}-x_{k+1}}{2})\sinh(\frac{\widetilde{x}_{k}-\widetilde{x}_{k+1}}{2}), k=1,\ldots,N-1.
\end{cases}
\end{equation}
In addition, the strongly coupled Toda lattices with indefinite metrics can be expressed as:
\begin{equation}\label{6.3}
\begin{cases}
\frac{da_{k}}{dt}=[s_{k+1}(b_{k}^{2}+\widetilde{b}_{k}^{2})-s_{k-1}(b_{k-1}^{2}+\widetilde{b}_{k-1}^{2})],\\
\frac{d\widetilde{a}_{k}}{dt}=2s_{k+1}b_{k}\widetilde{b}_{k}+2s_{k-1}b_{k-1}\widetilde{b}_{k-1},
\end{cases}
\end{equation}
\begin{equation}\label{6.4}
\begin{cases}
\frac{db_{k}}{dt}=\frac{1}{2}[s_{k+1}(b_{k}a_{k+1}+\widetilde{b}_{k}\widetilde{a}_{k+1})-s_{k}(b_{k}a_{k}+\widetilde{b}_{k}\widetilde{a}_{k})],\\
\frac{d\widetilde{b}_{k}}{dt}=\frac{1}{2}[s_{k+1}(\widetilde{b}_{k}a_{k+1}+b_{k}\widetilde{a}_{k+1})-s_{k}(\widetilde{b}_{k}a_{k}+b_{k}\widetilde{a}_{k})],
\end{cases}
\end{equation}
where $b_{0}=\widetilde{b}_{0}=b_{N}=\widetilde{b}_{N}=0$. According to the strongly coupled Toda lattices with indefinite metrics above, we can use Lax pair to express it as following,
\begin{equation}\label{6.5}
\begin{cases}
\frac{d}{dt}L=[B,L]+[\widetilde{B},\widetilde{L}],\\
\frac{d}{dt}\widetilde{L}=[\widetilde{B},L]+[B,\widetilde{L}],
\end{cases}
\end{equation}
where $L$ , $\widetilde{L}$  have the following form:
\begin{equation}\label{6.6}
L=\left(\begin{smallmatrix}
s_{1}a_{1}&0&s_{2}b_{1}&0&\cdots&0&0\\
0&s_{1}a_{1}&0&s_{2}b_{1}&\cdots&0&0\\
s_{1}b_{1}&0&s_{2}a_{2}&0&\cdots&0&0\\
0&s_{1}b_{1}&0&s_{2}a_{2}&\cdots&0&0\\
 & &\ddots&\ddots&\ddots&\\
0&\cdots&\cdots&0&0&s_{N}a_{N}&0\\
0&\cdots&\cdots&0&s_{N}b_{N-1}&0&s_{N}a_{N}\\
\end{smallmatrix}
\right),
\end{equation}
\begin{equation}\label{6.7}
\widetilde{L}=\left(\begin{smallmatrix}
0&s_{1}\widetilde{a}_{1}&0&s_{2}\widetilde{b}_{1}&\cdots&0&0\\
s_{1}\widetilde{a}_{1}&0&s_{2}\widetilde{b}_{1}&0&\cdots&0&0\\
0&s_{1}\widetilde{b}_{1}&0&s_{2}\widetilde{a}_{2}&\cdots&0&0\\
s_{1}\widetilde{b}_{1}&0&s_{2}\widetilde{a}_{2}&0&\cdots&0&0\\
 & &\ddots&\ddots&\ddots&\\
0&\cdots&\cdots&0&s_{N-1}\widetilde{b}_{N-1}&0&s_{N}\widetilde{a}_{N}\\
0&\cdots&\cdots&0&0&s_{N}\widetilde{a}_{N}&0\\
\end{smallmatrix}
\right),
\end{equation}
and $B$, $\widetilde{B}$ are given by
\begin{equation}\label{6.8}
\begin{cases}
B:=\frac{1}{2}[(L)_{>0}-(L)_{<0}],\\
\widetilde{B}:=\frac{1}{2}[(\widetilde{L})_{>0}-(\widetilde{L})_{<0}].
\end{cases}
\end{equation}
For Lax equations \eqref{6.5}, we get some linear equations,
\begin{equation}\label{6.9}
\begin{cases}
L\Phi+\widetilde{L}\widetilde{\Phi}=\Phi\Lambda,\\
\widetilde{L}\Phi+L\widetilde{\Phi}=\widetilde{\Phi}\Lambda,\\
\frac{d}{dt}\Phi=B\Phi+\widetilde{B}\widetilde{\Phi},\\
\frac{d}{dt}\widetilde{\Phi}=\widetilde{B}\Phi+B\widetilde{\Phi},
\end{cases}
\end{equation}
where $\Phi$ is the eigenmatrix of $L$, $\widetilde{\Phi}$ is the eigenmatrix of $\widetilde{L}$, and $\Lambda$ is a diagonal matrix. Thus, $\widetilde{\Phi}$ and $\Phi$ satisfy some relationships:
\begin{equation}\label{6.10}
\begin{cases}
\Phi S^{-1}\Phi^{T}+\widetilde{\Phi}S^{-1}\widetilde{\Phi}^{T}=S^{-1},\\
\widetilde{\Phi}S^{-1}\Phi^{T}+\Phi S^{-1}\widetilde{\Phi}^{T}=0,\\
\Phi^{T}S\Phi+\widetilde{\Phi}^{T}S\widetilde{\Phi}=S,\\
\widetilde{\Phi}^{T}S\Phi+\Phi^{T}S\widetilde{\Phi}=0,
\end{cases}
\end{equation}
where \begin{equation}\label{6.11}
\begin{cases}
\Phi=[\phi_{i}(\lambda_{j})]_{1\leq i,j\leq N},\\
\widetilde{\Phi}=[\widetilde{\phi}_{i}(\lambda_{j})]_{1\leq i,j\leq N}.
\end{cases}
\end{equation}
Form the equations of \eqref{6.10}, one can get the relationships,
\begin{equation}\label{6.12}
\begin{cases}
\sum_{k=1}^{N}s_{k}^{-1}[\phi_{i}(\lambda_{k})\phi_{j}(\lambda_{k})+\widetilde{\phi}_{i}(\lambda_{k})\widetilde{\phi}_{j}(\lambda_{k})]=\delta_{ij}s_{i}^{-1},\\
\sum_{k=1}^{N}s_{k}^{-1}[\widetilde{\phi}_{i}(\lambda_{k})\phi_{j}(\lambda_{k})+\phi_{i}(\lambda_{k})\widetilde{\phi}_{j}(\lambda_{k})]=0,\\
\sum_{k=1}^{N}s_{k}[\phi_{k}(\lambda_{i})\phi_{k}(\lambda_{j})+\widetilde{\phi}_{k}(\lambda_{i})\widetilde{\phi}_{k}(\lambda_{j})]=\delta_{ij}s_{i},\\
\sum_{k=1}^{N}s_{k}[\widetilde{\phi}_{k}(\lambda_{i})\phi_{k}(\lambda_{j})+\phi_{k}(\lambda_{i})\widetilde{\phi}_{k}(\lambda_{j})]=0.
\end{cases}
\end{equation}
So, the extended matrices of $L$ and $\widetilde{L}$ are expressed  by
\begin{equation}\label{6.13}
\begin{cases}
a_{ij}:=(L)_{ij}=s_{j}{<\lambda\phi_{i}\phi_{j}+\lambda\widetilde{\phi}_{i}\widetilde{\phi}_{j}>},\\
\widetilde{a}_{ij}:=(\widetilde{L})_{ij}=s_{j}{<\lambda\widetilde{\phi}_{i}\phi_{j}+\lambda\phi_{i}\widetilde{\phi}_{j}>}.
\end{cases}
\end{equation}
From the inverse scattering method, two new explicit forms of $\Phi(t)$, $\widetilde{\Phi}(t)$ are given as following
\begin{equation}\label{6.14}
\begin{aligned}
\phi_{i}(\lambda,t)&=M\left|\begin{array}{ccccc}
s_{1}c_{11}&\cdots&s_{i-1}c_{1,i-1}&\phi^{0}_{1}\\
\vdots&\vdots&\vdots&\vdots\\
s_{1}c_{i1}&\cdots&s_{i-1}c_{i,i-1}&\phi^{0}_{i}\\
\end{array}\right|+M\sum^{[\frac{i}{2}]}_{q=0}\sum_{\sum\limits_{j=1}^{i-1}k_{j}=2q}\left|\begin{array}{ccccc}
s_{1}c^{[k_{1}]}_{11}&\cdots&s_{i-1}c^{[k_{i-1}]}_{1,i-1}&\phi^{0}_{1}\\
\vdots&\vdots&\vdots&\vdots\\
s_{1}c^{[k_{1}]}_{i1}&\cdots&s_{i-1}c^{[k_{i-1}]}_{i,i-1}&\phi^{0}_{i}\\
\end{array}\right|
\\&\widehat{M}
\sum^{[\frac{i}{2}]}_{q=0}\sum_{\sum\limits_{j=1}^{i-1}k_{j}=2q-1}\left|\begin{array}{ccccc}
s_{1}c^{[k_{1}]}_{11}&\cdots&s_{i-1}c^{[k_{i-1}]}_{1,i-1}&\phi^{0}_{1}\\
\vdots&\vdots&\vdots&\vdots\\
s_{1}c^{[k_{1}]}_{i1}&\cdots&s_{i-1}c^{[k_{i-1}]}_{i,i-1}&\phi^{0}_{i}\\
\end{array}\right|,
\end{aligned}
\end{equation}
where $M=\frac{H_{0}e^{\lambda t}}{H^{2}_{0}-H^{2}_{1}}, \widehat{M}=-\frac{H_{1}e^{\lambda t}}{H^{2}_{0}-H^{2}_{1}}$,
and
\begin{equation}\label{6.15}
\begin{aligned}
\widetilde{\phi}_{i}(\lambda,t)&=\widehat{M}\left|\begin{array}{ccccc}
s_{1}c_{11}&\cdots&s_{i-1}c_{1,i-1}&\phi^{0}_{1}\\
\vdots&\vdots&\vdots&\vdots\\
s_{1}c_{i1}&\cdots&s_{i-1}c_{i,i-1}&\phi^{0}_{i}\\
\end{array}\right|+\widehat{M}\sum^{[\frac{i}{2}]}_{q=0}\sum_{\sum\limits_{j=1}^{i-1}k_{j}=2q}\left|\begin{array}{ccccc}
s_{1}c^{[k_{1}]}_{11}&\cdots&s_{i-1}c^{[k_{i-1}]}_{1,i-1}&\phi^{0}_{1}\\
\vdots&\vdots&\vdots&\vdots\\
s_{1}c^{[k_{1}]}_{i1}&\cdots&s_{i-1}c^{[k_{i-1}]}_{i,i-1}&\phi^{0}_{i}\\
\end{array}\right|
\\&+M
\sum^{[\frac{i}{2}]}_{q=0}\sum_{\sum\limits_{j=1}^{i-1}k_{j}=2q-1}\left|\begin{array}{ccccc}
s_{1}c^{[k_{1}]}_{11}&\cdots&s_{i-1}c^{[k_{i-1}]}_{1,i-1}&\phi^{0}_{1}\\
\vdots&\vdots&\vdots&\vdots\\
s_{1}c^{[k_{1}]}_{i1}&\cdots&s_{i-1}c^{[k_{i-1}]}_{i,i-1}&\phi^{0}_{i}\\
\end{array}\right|,
\end{aligned}
\end{equation}
where $H_{0}=\frac{\sqrt{2}b}{\sqrt{a+(a^{2}-b^{2})^{\frac{1}{2}}}}$, $H_{1}=\frac{\sqrt{a+(a^{2}-b^{2})^{\frac{1}{2}}}}{\sqrt{2}}$, and $a=D_{i}(t)D_{i-1}(t)+\widetilde{D}_{i}(t)\widetilde{D}_{i-1}(t)$, $b=\widetilde{D}_{i}(t)D_{i-1}(t)+D_{i}(t)\widetilde{D}_{i-1}(t)$, and
\begin{equation}\label{6.16}
\begin{aligned}
\widetilde{D}_{k}(t)&=\sum^{[\frac{i}{2}]}_{q=0}\sum_{\sum\limits_{j=1}^{i}k_{j}=2q-1}\left|\begin{array}{ccccc}
s_{1}c^{[k_{1}]}_{11}&\cdots&s_{p}c^{[k_{p}]}_{1,p}&\cdots&s_{i}c^{[k_{i}]}_{1,i}\\
\vdots&\vdots&\vdots&\vdots\\
s_{1}c^{[k_{1}]}_{i1}&\cdots&s_{p}c^{[k_{p}]}_{i,p}&\cdots&s_{i}c^{[k_{i}]}_{i,i}\\
\end{array}\right|,
\end{aligned}
\end{equation}
\begin{equation}\label{6.17}
\begin{aligned}
D_{k}(t)&=\sum^{[\frac{i}{2}]}_{q=0}\sum_{\sum\limits_{j=1}^{i}k_{j}=2q}\left|\begin{array}{ccccc}
s_{1}c^{[k_{1}]}_{11}&\cdots&s_{p}c^{[k_{p}]}_{1,p}&\cdots&s_{i}c^{[k_{i}]}_{1,i}\\
\vdots&\vdots&\vdots&\vdots\\
s_{1}c^{[k_{1}]}_{i1}&\cdots&s_{p}c^{[k_{p}]}_{i,p}&\cdots&s_{i}c^{[k_{i}]}_{i,i}\\
\end{array}\right|,
\end{aligned}
\end{equation}
and $c^{[k_{p}]}_{ij}=
\begin{cases}
c_{ij}, k_{p}=0\\
\widetilde{c}_{ij}, k_{p}=1.
\end{cases}
$ Thus we obtain the solutions \eqref{6.14} and \eqref{6.15} for the problem \eqref{6.9}. In fact, for the matrices $L$, $\widetilde{L}$, the determinants $D_{i}(t)$, $\widetilde{D}_{i}(t)$ can be written with $\tau$-functions, then $\alpha_{i}$, $\widetilde{\alpha}_{i}$, $\beta_{i}$ and $\widetilde{\beta}_{i}$ are expressed as
\begin{equation}\label{6.18}
\begin{cases}
\alpha_{i}=s_{i}a_{i}=\frac{d}{dt}\log\frac{(\tau_{i}-\widetilde{\tau}_{i})(\tau_{i-1}-\widetilde{\tau}_{i-1})}{\tau^{2}_{i-1}-\widetilde{\tau}^{2}_{i-1}},\\
\widetilde{\alpha}_{i}=s_{i}\widetilde{a}_{i}=\frac{1}{2}\frac{d}{dt}\log\frac{(\tau_{i}+\widetilde{\tau}_{i})(\tau_{i-1}-\widetilde{\tau}_{i-1})}{(\tau_{i}-\widetilde{\tau}_{i})(\tau_{i-1}+\widetilde{\tau}_{i-1})},\\
\beta_{i}=s_{i}s_{i+1}(b_{i}^{2}+\widetilde{b}_{i}^{2})=\frac{(\tau_{i+1}\tau_{i-1}+\widetilde{\tau}_{i+1}\widetilde{\tau}_{i-1})(\tau^{2}_{i}+\widetilde{\tau}^{2}_{i})-2(\widetilde{\tau}_{i+1}\tau_{i-1}+\tau_{i+1}\widetilde{\tau}_{i-1})\tau_{i}\widetilde{\tau}_{i}}{(\tau^{2}_{i}-\widetilde{\tau}^{2}_{i})^{2}},\\
\widetilde{\beta}_{i}=2s_{i}s_{i+1}b_{i}\widetilde{b}_{i}=\frac{(\widetilde{\tau}_{i+1}\tau_{i-1}+\tau_{i+1}\widetilde{\tau}_{i-1})(\tau^{2}_{i}+\widetilde{\tau}^{2}_{i})-2(\tau_{i+1}\tau_{i-1}+\widetilde{\tau}_{i+1}\widetilde{\tau}_{i-1})\tau_{i}\widetilde{\tau}_{i}}{(\tau^{2}_{i}-\widetilde{\tau}^{2}_{i})^{2}}.
\end{cases}
\end{equation}
The strongly coupled Toda equations also can be expressed by
\begin{equation}\label{6.19}
\begin{cases}
\tau_{i}\tau_{i}''+\widetilde{\tau}_{i}\widetilde{\tau}_{i}''-(\tau_{i}')^{2}+(\widetilde{\tau}_{i}')^{2}=\tau_{i+1}\tau_{i-1}+\widetilde{\tau}_{i+1}\widetilde{\tau}_{i-1},\\
\widetilde{\tau}_{i}\tau''_{i}+\tau_{i}\widetilde{\tau}''_{i}-2\tau_{i}'\widetilde{\tau}'_{i}=\widetilde{\tau}_{i+1}\tau_{i-1}+\tau_{i+1}\widetilde{\tau}_{i-1}.
\end{cases}
\end{equation}
From \cite{3,9,8}, the $\tau$-functions are written in the form with a simple structure,
\begin{equation}\label{6.20}
\begin{aligned}
\tau_{i}=\sum^{[\frac{i}{2}]}_{q=0}\sum_{\sum\limits_{j=1}^{i}k_{j}=2q}\left|\begin{array}{ccccc}
\tau_{1,[k_{1}]}&\tau^{'}_{1,[k_{2}]}&\cdots&\tau^{(i-1)}_{1,[k_{i-1}]}\\
\tau^{'}_{1,[k_{1}]}&\tau^{''}_{1,[k_{2}]}&\cdots&\tau^{(i)}_{1,[k_{i-1}]}\\
\vdots&\vdots&\ddots&\vdots\\
\tau^{(i-1)}_{1,[k_{1}]}&\tau^{(i)}_{1,[k_{2}]}&\cdots&\tau^{(2i-2)}_{1,[k_{i-1}]}
\end{array}\right|,
\end{aligned}
\end{equation}
\begin{equation}\label{6.21}
\begin{aligned}
\widetilde{\tau}_{i}=\sum^{[\frac{i}{2}]}_{q=0}\sum_{\sum\limits_{j=1}^{i}k_{j}=2q-1}\left|\begin{array}{ccccc}
\tau_{1,[k_{1}]}&\tau^{'}_{1,[k_{2}]}&\cdots&\tau^{(i-1)}_{1,[k_{i-1}]}\\
\tau^{'}_{1,[k_{1}]}&\tau^{''}_{1,[k_{2}]}&\cdots&\tau^{(i)}_{1,[k_{i-1}]}\\
\vdots&\vdots&\ddots&\vdots\\
\tau^{(i-1)}_{1,[k_{1}]}&\tau^{(i)}_{1,[k_{2}]}&\cdots&\tau^{(2i-2)}_{1,[k_{i-1}]}
\end{array}\right|.
\end{aligned}
\end{equation}
Similarly, $\tau^{(i)}_{1,[k_{p}]}=
\begin{cases}
\tau^{(i)}_{1}, k_{p}=0\\
\widetilde{\tau}_{1}^{(i)}, k_{p}=1,
\end{cases}
$ where $\tau_{1}$ is given by $\tau_{1}=c_{11}:=<(\phi^{0}_{1})^{2}+(\widetilde{\phi}^{0}_{1})^{2}>e^{\lambda t}$, $\widetilde{\tau}_{1}=\widetilde{c}_{11}:=<2\phi^{0}_{1}\widetilde{\phi}^{0}_{1}e^{\lambda t}>$, and $\tau^{(i)}_{1}=\frac{d^{i}\tau_{1}}{dt^{i}}$, $\widetilde{\tau}^{(i)}_{1}=\frac{d^{i}\widetilde{\tau}_{1}}{dt^{i}}$. Therefore, the relationships between $\beta_{i}$, $\widetilde{\beta}_{i}$, $D_{i}$ and $\widetilde{D}_{i}$ are given by
\begin{equation}\label{6.22}
\begin{cases}
\tau_{i}=\frac{1}{2s^{i}_{1}}[\prod^{i-1}_{k=1}(\beta^{0}_{k}+\widetilde{\beta}^{0}_{k})^{i-k}(D_{i}+\widetilde{D}_{i})+(\beta^{0}_{k}-\widetilde{\beta}^{0}_{k})^{i-k}(D_{i}-\widetilde{D}_{i})],\\
\widetilde{\tau}_{i}=\frac{1}{2s^{i}_{1}}[\prod^{i-1}_{k=1}(\beta^{0}_{k}+\widetilde{\beta}^{0}_{k})^{i-k}(D_{i}+\widetilde{D}_{i})+(\beta^{0}_{k}-\widetilde{\beta}^{0}_{k})^{i-k}(\widetilde{D}_{i}-D_{i})].
\end{cases}
\end{equation}
\section{Weakly coupled $Z_{n}$-Toda lattices with indefinite metrics}
{\ \ \ \ In the next} part, we will give a new  finite nonperiodic $Z_{n}$-Toda lattices with indefinite metrics as following.
\begin{definition}
According to \eqref{2.3}, we define finite nonperiodic $Z_{n}$-Toda lattice equations with indefinite metrics as:
\begin{equation}\label{5.1}
\begin{cases}
\frac{da_{k,l}}{dt}=s_{k+1}\sum\limits_{{p+q=l+1}}b_{k,p}b_{k,q}-s_{k-1}\sum\limits_{{p+q=l+1}}b_{k-1,p}b_{k-1,q},\\
\frac{db_{k,l}}{dt}=\frac{1}{2}(s_{k+1}\sum\limits_{{p+q=l+1}}b_{k,p}a_{k+1,q}-s_{k}\sum\limits_{{p+q=l+1}}b_{k,p}b_{k,q}).
\end{cases}
\end{equation}
\end{definition}
When $l=1$, $a_{k,l}$ and $b_{k,l}$ are equivalent to $a_{k}$ and $b_{k}$ \cite{1}. In fact, before defining  finite nonperiodic $Z_{n}$-Toda lattice equations, we introduce a more general transformation of variables, the specific transformation is given by
\begin{equation}\label{5.2}
\begin{cases}
s_{k}a_{k,l}=-\frac{y_{k,l}}{2},(k=1,2,\cdots,N)\\
b_{k,l}=\frac{1}{2}\underset{i_{1}k_{1}+\cdots+i_{j}k_{j}=k}\sum\frac{\widehat{x}^{k_{1}}_{i_{1}}\cdots \widehat{x}^{k_{j}}_{i_{j}}}{k_{1}!\cdots k_{p}!}\exp(\frac{x_{k,l}-x_{k+1,l}}{2}), (k=1,2,\cdots,N-1)
\end{cases}
\end{equation}
where $\widehat{x}_{k,l}=\frac{1}{2}(x_{k,l}-x_{k+1,l})$.
Meanwhile, the $Z_{n}$-Hamilton quantity is as
\begin{equation}\label{5.3}
H_{k}=\frac{1}{2}\sum\limits_{k=1}^{N}\sum_{p+q=k}y_{k,p}y_{k,q}+\underset{i_{1}k_{1}+\cdots+i_{j}k_{j}=k}\sum\frac{\overline{x}^{k_{1}}_{i_{1}}\cdots \overline{x}^{k_{j}}_{i_{j}}}{k_{1}!\cdots k_{p}!}\exp(x_{k,1}-x_{k+1,1}),
\end{equation}
where $\overline{x}_{i_{j}}=x_{i,j}-x_{i+1,j}$. According to the definition of the  finite nonperiodic $Z_{n}$-Toda lattice equations with indefinite metrics \eqref{5.1}, we can {obtain} its Lax equations,
\begin{equation}\label{5.4}
\frac{d}{dt}L_{k}=\sum_{p+q=k+1}[B_{p},L_{q}],
\end{equation}
where
\begin{equation}\label{5.5}
L_{k}=\left(\begin{smallmatrix}
s_{1}\sum a_{1,k}\Gamma^{k-1}&s_{2}\sum b_{1,k}\Gamma^{k-1}&0&\cdots&0\\
s_{1}\sum b_{1,k}\Gamma^{k-1}&s_{2}\sum a_{2,k}\Gamma^{k-1}&s_{3}\sum b_{2,k}\Gamma^{k-1}&\cdots&0\\
 &\ddots&\ddots&\vdots\\
0&0&\cdots&s_{N-1}\sum a_{N-1,k}\Gamma^{k-1}&s_{N}\sum b_{N-1,k}\Gamma^{k-1}\\
0&0&\cdots&s_{N-1}\sum b_{N-1,k}\Gamma^{k-1}&s_{N}\sum a_{N,k}\Gamma^{k-1}\\
\end{smallmatrix}
\right)\\,
\end{equation}
and $B_{k}=\frac{1}{2}[(L_{k})_{>0}-(L_{k})_{<0}]$. According to the extended general variables, $\alpha_{k,l}$, $\beta_{k,l}$, $a_{k,l}$ and $b_{k,l}$ are given by
\begin{equation}\label{5.6}
\begin{cases}
\alpha_{k,l}=s_{k}a_{k,l},\\
\beta_{k,l}=\sum_{p+q=k+1}s_{k}s_{k+1}b_{p,l}b_{q,l}.
\end{cases}
\end{equation}
For \eqref{5.4}, linear equations produced from the inverse scattering method can be expressed as:
\begin{equation}\label{5.7}
\begin{cases}
\sum\limits_{p+q=k+1}L_{p}\Phi_{q}=\Lambda\Phi_{k},\\
\frac{d}{dt}\Phi_{k}=\sum\limits_{p+q=k+1}B_{p}\Phi_{q},
\end{cases}
\end{equation}
where $\Phi_{k}$ is the eigenmatrix of $L_{k}$, and $\Phi_{k}\equiv[\phi^{[k]}(\lambda_{1}),\cdots,\phi^{[k]}(\lambda_{N})]=[\phi^{[k]}_{i}(\lambda_{j})]_{1\leq i,j\leq N}$. Further more, $\Phi_{k}$ satisfies
\begin{equation}\label{5.8}
\begin{cases}
\Phi_{1}^{T}S\Phi_{1}=S,\\
\sum\limits_{p+q=k}\Phi_{p}^{T}S\Phi_{q}=0,\ \ k=2,3\cdots,N,\\
\Phi_{1} S^{-1}\Phi_{1}^{T}=S^{-1},\\
\sum\limits_{p+q=k}\Phi_{p} S^{-1}\Phi_{q}^{T}=0,\ \ k=2,3\cdots,N.
\end{cases}
\end{equation}
From \eqref{5.8}, one can get the ``orthogonality'' relations:
\begin{equation}\label{5.10}
\begin{cases}
\sum_{k=1}^{N}s_{k}^{-1}\phi^{[1]}_{i}(\lambda_{k})\phi^{[1]}_{j}(\lambda_{k})=\delta_{ij}s_{i}^{-1},\\
\sum_{k=1}^{N}s_{k}^{-1}\left(\sum\limits^{N}_{p+q=3}\phi^{[p]}_{i}(\lambda_{k})\phi^{[q]}_{j}(\lambda_{k})\right)=0,\\
\sum_{k=1}^{N}s_{k}\phi^{[1]}_{k}(\lambda_{i})\phi^{[1]}_{k}(\lambda_{j})=\delta_{ij}s_{i},\\
\sum_{k=1}^{N}s_{k}\left(\sum\limits^{N}_{p+q=3}\phi^{[p]}_{k}(\lambda_{i})\phi^{[q]}_{k}(\lambda_{j})\right)=0.
\end{cases}
\end{equation}
So, the $L_{k}$ can be expressed by
\begin{equation}\label{5.11}
L_{k}=\sum\limits_{p+q=k+1}\Phi_{p}\Lambda\Phi^{T}_{q}.
\end{equation}
According to the proof of section 3, the specific form of $\phi^{[k]}_{i}(\lambda,t)$  can be given as
\begin{align}\label{5.12}
\phi^{[k]}_{i}(\lambda,t)&=\overline{D}_{0}(t)\sum\limits_{j=1}^{N-1}\left(\sum\limits_{[p_{1}+\cdots+p_{i-1}-(i-1)]=j}\left|\begin{array}{cccccc}
s_{1}c^{[k_{p_{1}}]}_{11}&\cdots&s_{i-1}c^{[k_{p_{i-1}}]}_{1,i-1}&\phi^{0}_{1}(\lambda)\\
\vdots&\vdots&\ddots\\
s_{1}c^{[k_{p_{1}}]}_{i1}&\cdots&s_{i-1}c^{[k_{p_{i-1}}]}_{i,i-1}&\phi^{0}_{i}(\lambda)
\end{array}\right|\right)\\
&+\overline{D}_{1}(t)\sum\limits_{j=1}^{N-1}\left(\sum\limits_{[p_{1}+\cdots+p_{i-1}-(i-1)]=j}\left|\begin{array}{cccccc}
s_{1}c^{[k_{p_{1}}]}_{11}&\cdots&s_{i-1}c^{[k_{p_{i-1}}]}_{1,i-1}&\phi^{0}_{1}(\lambda)\\
\vdots&\vdots&\ddots\\
s_{1}c^{[k_{p_{1}}]}_{i1}&\cdots&s_{i-1}c^{[k_{p_{i-1}}]}_{i,i-1}&\phi^{0}_{i}(\lambda)
\end{array}\right|\right)\\
&+\cdots+\overline{D}_{n-1}(t)\left|\begin{array}{cccccc}
s_{1}c^{[k_{1}]}_{11}&\cdots&s_{i-1}c^{[k_{i-1}]}_{1,i-1}&\phi^{0}_{1}(\lambda)\\
\vdots&\vdots&\ddots\\
s_{1}c^{[k_{1}]}_{i1}&\cdots&s_{i-1}c^{[k_{i-1}]}_{i,i-1}&\phi^{0}_{i}(\lambda)
\end{array}\right|,
\end{align}
where $\overline{D}_{i}(t)$ is the $n$-th order Frobenius form of $\frac{e^{\lambda t}}{\sqrt{D_{i}(t)D_{i-1}(t)}}$, and the specific forms are given by $\overline{D}_{i}(t)=\underset{k_{0}+k_{1}+\cdots+k_{p}=i-1}\sum\frac{(-1)^{i}v_{0}^{k_{0}}v_{1}^{k_{1}}\cdots v_{i-1}^{k_{i-1}}}{v^{i}_{0}}e^{\lambda t}$. For $v_{i}$ above, we used a variable replacement: $u_{i}=\underset{p+q=i+1}\sum D_{i,p}(t)D_{i-1,q}(t)$, so we can get the relationship between $v_{k}$ and $u_{k}$ by iterative methods.
\\

{\bf {Acknowledgements:}}
Chuanzhong Li  is  supported by the National Natural Science Foundation of China under Grant No. 11571192 and K. C. Wong Magna Fund in
Ningbo University.


\begin{thebibliography}{xin}
\bibitem{16}C. Z. Li, J. S. He, The extended $Z_N$ -Toda hierarchy, Theoretical and Mathematical Physics. 185 (2015), 1614-1635.
\bibitem{1}Y. Kodama, J. Ye, Toda lattices with indefinite metric II: Topology of the iso-spectral manifolds, Physica D. 121 (1998), 89-108.
\bibitem{18}H. Flaschka, L. Haine, Vari\'{e}t\'{e}s de drapeaux et r\'{e}seaux de Toda, Mathematical Z. 208 (1991), 545-556.
\bibitem{19}H. Flaschka, L. Haine, Toda orbits in G/P, Pacific Journal Mathematics. (1991), 251-292.
\bibitem{2}Y. Kodama, J. Ye, Iso-spectral deformations of general matrix and their reductions on Lie algebras, Communications in Mathematical Physics. 178 (1996),  765-788.
 \bibitem{14}M. Jimbo, T. Miwa, Solitons and Infinite Dimensional Lie Algebras, Publications of the Research Institute for Mathematical Sciences 19 (1983), 943-1001.

\bibitem{7} J. Moser, Finitely many mass points on the line under the influence of an exponential potential-an integrable system in dynamical systems, Theory and Applications. Lecture Notes in Physics. 38 (1975), 467-497.
\bibitem{13}Y. Kodama, J. Ye, Toda lattice with Indefinite metric. Physica D. 91 (1996), 321-339.
\bibitem{6}H. Flaschka, On the Toda lattice II, Physical Review B. 51 (1974), 703-716.

\bibitem{11}G. Szeg\"{o}, Orthogonal Polynomials, AMS Colloquium Publications. 23 (1939).
\bibitem{5}Y. Kodama and K. Mclaughlin, Explicit integration of the full symmetric toda hierarchy and the sorting property, Letters in Mathematical Physicas. 37 (1995), 37-47.
  \bibitem{3}Yu. M. Berezanskii, Integration of semi-infinite Toda chain by means of the inverse spectral problem, Reports on Mathematical Physics. 24 (1986), 21-47.
\bibitem{9}Y. Nakamura and Y. Kodama, Moment of Hamburger, hierarchies of interable systems, and the positivity of tau-functions, Acta Applicandae Mathematicae. 39 (1995), 435-443.
\bibitem{8}Y. Nakamura, A tau-function for the finite nonperiodic Toda lattice, Physics Letters A 195 (1994), 346-350.
\end{thebibliography}
\end{document}